\documentclass[10pt,draftcls,onecolumn]{IEEEtran}

\usepackage{cite}

\usepackage{colortbl}
\usepackage{color}
\usepackage{xcolor}
\usepackage{bm}
\usepackage{bbm}
\usepackage{amsmath}
\usepackage{multirow}

\usepackage{cases}

\usepackage{amssymb}
\usepackage{amsthm}
\usepackage{graphicx}
\usepackage{mathrsfs}
\usepackage{empheq}	
\usepackage[mathcal]{euscript}
\usepackage[margin = 2cm]{geometry}
\usepackage{framed}
\usepackage{comment}

\usepackage{dsfont}
\usepackage{xr}
\makeatletter
\newcommand*{\addFileDependency}[1]{
  \typeout{(#1)}
  \@addtofilelist{#1}
  \IfFileExists{#1}{}{\typeout{No file #1.}}
}
\makeatother

\DeclareFontFamily{U}{mathx}{\hyphenchar\font45}
\DeclareFontShape{U}{mathx}{m}{n}{
      <5> <6> <7> <8> <9> <10>
      <10.95> <12> <14.4> <17.28> <20.74> <24.88>
      mathx10
      }{}
\DeclareSymbolFont{mathx}{U}{mathx}{m}{n}
\DeclareFontSubstitution{U}{mathx}{m}{n}
\DeclareMathAccent{\widecheck}{0}{mathx}{"71}

\newtheorem{theorem}{Theorem}
\newtheorem{lemma}{Lemma}
\newtheorem{corollary}{Corollary}

\newtheorem{assumption}{Assumption}

\newtheorem{definition}{Definition}
\newtheorem{example}{Example}

\newcommand{\beq}{\begin{equation}}
\newcommand{\eeq}{\end{equation}}
\newcommand{\beqa}{\begin{IEEEeqnarray}{rCl}}
\newcommand{\eeqa}{\end{IEEEeqnarray}}

\newcommand{\thetatrue}{\theta^{\bullet}}

\title{Performance Evaluation of Social Learning
\thanks{
F. Scala, M. Carpentiero, and V. Matta are with
DIEM, University of Salerno, I-84084 Fisciano (SA), Italy, and also with CNIT, Italy.

A. H. Sayed is with
EPFL, CH-1015 Lausanne, Switzerland.

The work of M. Carpentiero and V. Matta was partially supported by the European Union under the Italian National Recovery and Resilience Plan (NRRP) of NextGenerationEU, partnership on ``Telecommunications of the Future'' (PE00000001 - program ``RESTART'').

A short conference version of this article was presented at EUSIPCO 2025~\cite{ourEUSIPCO2025}.
}
}

\author{\IEEEauthorblockN{Felice Scala, Marco Carpentiero, Vincenzo Matta, and Ali H. Sayed}
}

\begin{document}
\maketitle

\begin{abstract}
Social Learning is a decentralized decision-making paradigm in which spatially dispersed agents collect streaming observations regulated by one of a finite number of models ({\em the hypotheses}).
The agents are interested in assigning probability scores (\emph{the beliefs}) to the possible hypotheses.
To this end, the agents exchange their beliefs according to a certain communication graph.
It has been shown that, under reasonable conditions on the identifiability of the decision model and the network connectivity, each agent ultimately places all the belief mass on the true hypothesis governing the data. 
However, several questions remain unanswered regarding the evaluation of the social learning performance.
One recently adopted performance metric is the rejection rate, i.e., the rate at which the beliefs about the erroneous hypotheses vanish. One contribution of this work is to establish that the rejection rate leads to several paradoxes, which make it unsuitable as a valid performance measure. 
We then focus on studying the error probability measure. 
For a binary Gaussian problem, we derive an analytical formula characterizing the ratio between the individual agents' probabilities and the optimal Bayesian probability. The formula shows that this ratio is expressed by the product of two terms quantifying the effect of the network connectivity and the role of the prior information. As a result, an \emph{irreducible gap} emerges between the decentralized and the centralized error probabilities, which is \emph{agent-dependent and does not disappear asymptotically}. 
\end{abstract}

\begin{IEEEkeywords}
Social learning, error probability, large deviations, rejection rate.
\end{IEEEkeywords}

\section{Introduction and Related Work}

\IEEEPARstart{I}{n} recent years, the theory of \emph{social learning (SL)} has been developed to design \emph{decentralized decision-making} systems \cite{mattaSocialLearningOpinion2024, bordignonSociallyIntelligentNetworks2024,zhaoLearningSocialNetworks2012,jadbabaieNonBayesianSocialLearning2012,nedicFastConvergenceRates2017,lalithaSocialLearningDistributed2018,Jadbabaie2018,Djuric2015,krishnamurthySocialLearningBayesian2013,chamleyRationalHerdsEconomic2004}. 
These are multi-agent systems where the agents are connected and allowed to share information according to a certain network graph. 
The agents collect from the environment streaming observations about some phenomenon of interest. The collected data are random, and their statistical behavior is governed by one from among a finite collection of models (called \emph{the hypotheses}). The true model is unknown to the agents.
The goal of each agent in social learning is to assign a probability score (called \emph{the belief}) to any hypothesis within the admissible set. 
Then, each agent decides in favor of the hypothesis that maximizes its belief. 
For several reasons, which also depend on the specific application (e.g., privacy requirements in a social network, communication/power/memory constraints in a sensor network), the agents are often prevented from sharing the raw data or their private decision models. They are allowed to share their beliefs, in particular, locally with their neighbors according to the network graph.

Several useful results are available regarding the theory of social learning --- see \cite{mattaSocialLearningOpinion2024, bordignonSociallyIntelligentNetworks2024} for overviews. 
The earlier implementations focused on arithmetic averaging schemes based on the diffusion \cite{zhaoLearningSocialNetworks2012} or consensus \cite{jadbabaieNonBayesianSocialLearning2012} paradigms, to pool the beliefs of neighboring agents. Subsequent studies considered the geometric-averaging pooling rule \cite{nedicFastConvergenceRates2017,lalithaSocialLearningDistributed2018, Jadbabaie2018}, which we will refer to as \emph{traditional SL} in the following.
 
The basic result in SL theory is the following consistency result that applies to different SL implementations: over a connected network (loosely meaning that any two agents are connected by a certain path over the graph) and assuming that the decision models are globally identifiable (loosely meaning that for each hypothesis there exists at least one agent whose data are informative to distinguish that hypothesis), each agent ultimately places all the belief mass on the true hypothesis. 
However, consistency only reveals that for sufficiently large observation windows each agent learns well. Several open points remain regarding \emph{how well the agents learn}, including: the choice of the performance metric, the analytical characterization of the performance, and the comparison among different strategies. 

For example, in traditional SL it has been shown that the beliefs about the wrong hypotheses vanish exponentially fast, with a certain \emph{rejection rate (RR)} \cite{nedicFastConvergenceRates2017,lalithaSocialLearningDistributed2018}. 
Starting from this fact, the rejection rate has been adopted in some works as a performance proxy to compare different strategies \cite{MertAAGA, mitraNewApproachDistributed2021}. Specifically, a strategy proposed in \cite{mitraNewApproachDistributed2021} (that we refer to as the \emph{minimum-belief strategy}) is shown to exhibit a larger rejection rate than traditional SL. Based on this result, superiority of the minimum-belief strategy over traditional SL was claimed. 
One contribution of this work is to show that the rejection rate is not a valid performance metric, since it leads to several paradoxes. In particular, we will show that any rejection rate is achievable, and that, if the rejection rate were used, a clairvoyant system implementing the true posterior would be outperformed by (infinitely many) other constructions where the beliefs can be arbitrarily distant from the true posterior. 

Therefore, we turn to examine SL strategies with a well-established criterion borrowed from decision theory, namely, the \emph{error probability}. 
Previous studies~\cite{ nedicFastConvergenceRates2017,lalithaSocialLearningDistributed2018,mattaSocialLearningOpinion2024, bordignonSociallyIntelligentNetworks2024,HuangWang,bordignonSocialLearningNonBayesian2023} 
have characterized the exponential rate at which the error probability \emph{(not the belief)} vanishes with time. 
When the observations are independent over time and across the agents, traditional social learning and the recent non-Bayesian-non-Bayesian (NB$^2$) strategy are able to attain the same error exponent as the optimal centralized decision strategy~\cite{mattaSocialLearningOpinion2024, bordignonSociallyIntelligentNetworks2024,HuangWang,bordignonSocialLearningNonBayesian2023}.

In this work, we revisit more carefully the impact of this error-exponent optimality, showing that it hides important factors. 
The relevant problem of examining the higher-order effects on the error probability has been recently addressed in~\cite{HuangWang}. 
For the case of a binary detection problem, the Authors compute closed-form upper bounds on the asymptotic ratio between the individual agents' probabilities and the optimal error probability. It is shown in~\cite{HuangWang} that the obtained bounds are not tight. Thus, the results from~\cite{HuangWang} are useful to ascertain that the loss due to decentralization cannot be worse than a certain amount. 
At the same time, since the bounds do not represent the exact gap, they do not capture the precise dependence on the system parameters and they leave open the possibility that the actual gap is asymptotically negligible.  

In comparison, in this work we derive an \emph{exact analytical formula} for this asymptotic ratio, in the case of a binary shift-in-mean detection problem involving Gaussian distributions. 
Remarkably, the analytical formula exhibits the following factorization property: the loss with respect to the optimal strategy is composed of the product of two terms, namely, $i)$ a gap due to the network structure; and $ii)$  a gap arising from the choice of the initial beliefs and/or the uneven prior assignment on the hypotheses. 
The formula reveals that, in the considered Gaussian case, the gap between the decentralized and the centralized error probabilities is \emph{irreducible}. 

Finally, we compare the most popular SL strategies in terms of error probability, revealing new results for what concerns the relative ordering of the different strategies.

\section{Background and Notation}

Social learning accounts for decentralized decision-making strategies where $K$ agents cooperate by exchanging information to reveal the true state of nature, which is one from among $H$ possible hypotheses contained in the set $\Theta = \{\theta_1, \ldots, \theta_{H}\}$. 
The true hypothesis is denoted by $\theta^{\bullet}$. 
Each agent $k$ has access to a local and private stream of data samples $\bm{x}_{k,t} \in \mathcal{X}_k$, where the subscript $t$ denotes the time instants $1,2,\ldots,$ and the bold notation denotes random quantities, which are independent over time but possibly dependent over space, i.e., across the agents. 

We start with the \emph{centralized} Bayesian system, which assumes no constraints arising from the distributed setting.

\textbf{\textit{Centralized Bayesian Strategy}.}
The Bayesian posterior based on the entire set of data available to all agents up to time $t$ amounts to computing the \emph{true} posterior probabilities 
\begin{equation}
    \bm{\mu}_t^{\star}(\theta)
    \propto \pi(\theta)\prod_{\tau=1}^t \ell_{\text{tot}}(\bm{x}_{1,\tau},\ldots,\bm{x}_{K,\tau}|\theta),
    \label{eq:truepost}
\end{equation}
where the symbol $\propto$ hides the proportionality constant needed to make $\bm{\mu}^{\star}_{t}= [\bm{\mu}^{\star}_{t}(\theta_1),\ldots,\bm{\mu}^{\star}_{t}(\theta_H)]$ a probability vector, $\pi(\theta)$ is the prior probability of $\theta$, and $\ell_{\text{tot}}(\cdot|\theta)$ is the \emph{joint} (across the agents) likelihood model. 
According to the maximum-a-posteriori probability (MAP) rule, the centralized system decides in favor of the hypothesis that maximizes $\bm{\mu}^{\star}_t(\theta)$ over $\theta\in\Theta$ and this choice minimizes the probability of error. $\hfill\square$

Unfortunately, when we move to the decentralized setting, the joint likelihood model is usually not available. 
This explains why in social learning each agent $k$ is allowed to employ only a \emph{marginal} likelihood model $\ell_k(\cdot|\theta)$. These marginal models are used to run a decentralized online algorithm of the following form: The output of agent $k$ at time $t$ is a belief vector $\bm{\mu}_{k,t} = [\bm{\mu}_{k,t}(\theta_1),\ldots,\bm{\mu}_{k,t}(\theta_H)]$, constructed based on iterative consultation steps among neighbors.
The iterative process consists of two steps: $i)$ a \emph{self-learning} step, where each agent $k$ updates individually its previous belief vector $\bm{\mu}_{k,t-1}$ by incorporating the marginal likelihood of the fresh data sample $\bm{x}_{k,t}$ to form an intermediate belief vector $\bm{\psi}_{k,t}$; $ii)$ a \emph{cooperation step}, where the intermediate beliefs are shared over the network and each agent $k$ pools the $\bm{\psi}_{j,t}$ received from its neighbors $j$ to form the output belief vector $\bm{\mu}_{k,t}$. 
Finally, agent $k$ makes its decision by picking the hypothesis to which it assigns the highest credibility, i.e., the hypothesis that maximizes $\bm{\mu}_{k,t}(\theta)$ over $\theta\in\Theta$.

Before proceeding, we introduce a basic assumption that rules out the possibility that some hypothesis is discarded \emph{ab initio} from the decision process.
\begin{assumption}[\textbf{True prior and initial beliefs}]
\label{ass:nonzerobel}
Regarding the true prior, we assume that $\pi(\theta)>0$ for all $\theta\in\Theta$. Likewise, we assume that the initial belief vectors $\mu_{k,0}$ of any SL strategy considered in the following have all nonzero entries.$\hfill\square$
\end{assumption}

\subsection{Network Graphs}
The agents communicate by relying on the \emph{network topology} that connects them, which is described by a weighted directed graph, where $i)$ the vertices are the agents; $ii)$ the edges represent the communication links; and $iii)$ the nonnegative weight $a_{jk}$ attached to the directed link from $j$ to $k$ quantifies the degree of importance that agent $k$ assigns to the information received from agent $j$. 
If the entry $a_{jk}$ is zero, then there exists no link from agent $j$ to agent $k$.
Accordingly, a (directed) path from node $j$ to node $k$ is a sequence of
links, where the first link in the sequence starts at $j$ and the last link
ends at $k$.
The neighborhood of agent $k$ is formally defined as
\begin{equation}
\mathcal{N}_k \triangleq \left\{ j: a_{jk} > 0\right\}.
\end{equation}
The communication structure can be conveniently encoded into a nonnegative $K \times K$ \emph{combination matrix} $A = [a_{jk}]$. 
In order to ensure convergence of the learning algorithm, the combination matrix $A$ must fulfill some conditions~\cite{mattaSocialLearningOpinion2024}. First of all, the sum over each column must add up to $1$.
\begin{definition}
\label{ass:leftStochasticComb}
The matrix $A$ is left-stochastic when
\begin{equation} 
\sum_{j=1}^K a_{j k} = 1,\qquad
a_{j k} \geq 0 \quad\forall j,k.
\end{equation}\hfill$\square$
\end{definition}
Note that in the special case where the rows of $A$ also add up to $1$, the matrix is said to be \emph{doubly stochastic}.

Another important attribute to achieve consistent learning is the network connectivity, which translates into the requirement of irreducibility for the combination matrix.  
\begin{definition}
The matrix $A$ is irreducible when for any agents $j$ and $k$ there exist paths that connect them in both directions.$\hfill\square$ 
\end{definition}

When $A$ is irreducible, the Perron-Frobenius theorem ensures that the matrix has a single eigenvalue equal to $1$ (albeit other eigenvalues can still have \emph{magnitude} equal to $1$). With the eigenvalue at $1$, we associate an eigenvector $v=[v_k]$, called the Perron vector, which satisfies~\cite{hornJohnson}
\begin{equation}
\sum_{j=1}^K v_j = 1, \qquad v_j > 0 \;\;\forall j.
\label{eq:perronDef}
\end{equation}
When the combination matrix is doubly stochastic, the Perron vector entries are all equal to $1/K$~\cite{mattaSocialLearningOpinion2024}. 

Finally, we introduce the concept of matrix primitivity (with reference to left-stochastic matrices). 

\begin{definition}
A left-stochastic matrix $A$ is primitive when it is irreducible and has only one eigenvalue having magnitude equal to $1$.
A sufficient condition for $A$ to be primitive is to be irreducible with at least one positive diagonal entry.$\hfill\square$
\end{definition}

It is useful to recall the following known result from matrix theory~\cite{hornJohnson}, which will be exploited in our derivations.
\begin{lemma}[\textbf{Convergence of primitive-matrix powers}]
\label{lem:classicmatlem}
Let $A$ be a left-stochastic primitive matrix with Perron vector $v$. Then,
the following bound holds:
\begin{equation}
\left|[A^t]_{jk}-v_j\right|\leq c_\lambda \cdot \lambda^t,
\label{eq:matbound}
\end{equation}
for any $0<\lambda<1$ greater than the magnitude of the second (according to a decreasing-magnitude ordering) eigenvalue of $A$. For a fixed $A$, $c_\lambda$ is a constant depending only on $\lambda$. Note that \eqref{eq:matbound} implies 
$\lim_{t\rightarrow\infty} [A^t]_{jk}=v_j$.$\hfill\square$
\end{lemma}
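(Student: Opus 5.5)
The natural route is spectral: split $A^t$ into its rank-one Perron part plus a remainder, and show the remainder decays geometrically. Since $A$ is left-stochastic, $\mathbb{1}^\top A=\mathbb{1}^\top$, so $\mathbb{1}$ is a left eigenvector for the eigenvalue $1$. The Perron vector $v$ is the corresponding right eigenvector, $Av=v$, normalized so that $\mathbb{1}^\top v=1$. Define the rank-one matrix $P\triangleq v\,\mathbb{1}^\top$, whose $(j,k)$ entry is exactly $v_j$. The claim then becomes $|[A^t-P]_{jk}|\le c_\lambda\lambda^t$ entrywise.

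First I will check the algebraic identities $AP=PA=P$ and $P^2=P$. These follow from $Av=v$, $\mathbb{1}^\top A=\mathbb{1}^\top$ and $\mathbb{1}^\top v=1$. They give, by induction, $A^t-P=(A-P)^t$ for $t\ge1$: indeed $(A-P)^t=A^t-P$, since every cross term containing $P$ collapses to $\pm P$ and these sum to $-P$.

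Second, I will show that the spectral radius of $B\triangleq A-P$ equals $|\lambda_2|$, the second-largest eigenvalue magnitude of $A$. Primitivity ensures the eigenvalue $1$ is algebraically simple (Perron--Frobenius), and that every other eigenvalue satisfies $|\lambda_i|<1$. The matrix $P$ is the spectral projector onto the eigenspace of $1$. Hence $B$ has eigenvalue $0$ on $\mathrm{span}(v)$ and coincides with $A$ on the complementary invariant subspace $\ker(\mathbb{1}^\top)$. This can be seen through a Jordan form of $A$ in which $v$ and $\mathbb{1}^\top$ are the first column of the basis and the first row of its inverse, respectively. Consequently the spectrum of $B$ is $\{0,\lambda_2,\ldots,\lambda_K\}$, and $\rho(B)=|\lambda_2|<1$.

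Third, for any $\lambda$ with $|\lambda_2|<\lambda<1$, Gelfand's formula $\|B^t\|^{1/t}\to\rho(B)$ gives $\|B^t\|\le\lambda^t$ for all $t\ge t_0$. Setting $c_\lambda=\max\{1,\max_{t<t_0}\|B^t\|/\lambda^t\}$ yields $\|B^t\|\le c_\lambda\lambda^t$ for every $t$. Alternatively, I can argue from the Jordan form, where blocks contribute terms of size $t^{m}|\lambda_2|^t\le C\lambda^t$. The entrywise bound follows because each entry is at most the $\infty$-norm, which gives \eqref{eq:matbound}. Letting $t\to\infty$ gives $[A^t]_{jk}\to v_j$.

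The only delicate point is the second step: showing that primitivity makes $1$ simple and strictly dominant, and that subtracting $P$ removes exactly that eigenvalue. This is the standard Perron--Frobenius theorem for primitive matrices, which I will cite from~\cite{hornJohnson}. I also need care that $c_\lambda$ can blow up as $\lambda\downarrow|\lambda_2|$ when $A$ has nontrivial Jordan blocks, which is why the statement requires $\lambda$ strictly larger than $|\lambda_2|$.
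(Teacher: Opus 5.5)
Your proof is correct. The paper gives no argument of its own for this lemma; it simply recalls it as a known fact from \cite{hornJohnson}. Your route is exactly the standard textbook argument behind that citation: the rank-one spectral projector $P$ with $[P]_{jk}=v_j$, the identity $A^t-P=(A-P)^t$, the fact $\rho(A-P)=|\lambda_2|<1$ from Perron--Frobenius, and Gelfand's formula (or the Jordan form) giving $\|(A-P)^t\|\le c_\lambda\lambda^t$.

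One detail is worth stating explicitly. The identity $A^t-P=(A-P)^t$ holds only for $t\ge 1$, so the case $t=0$ is covered separately by $|\delta_{jk}-v_j|\le 1\le c_\lambda$, which your choice $c_\lambda\ge 1$ already guarantees.
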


In summary, we will use the following assumption for the combination matrix $A$:
\begin{assumption}[\textbf{Combination matrix}]
\label{ass:primitive_combination}
The combination matrix $A$ is left-stochastic and primitive.$\hfill\square$
\end{assumption}

\subsection{Learning Algorithms}

\textbf{\textit{Traditional Social Learning}.}
In traditional social learning, the self-learning and cooperation steps are, respectively,
\begin{subequations}
\begin{align}
\bm{\psi}^{\mathrm{SL}}_{k,t}(\theta) & \propto \bm{\mu}^{\mathrm{SL}}_{k,t-1}(\theta) \, \ell_{k}(\bm{x}_{k,t}|\theta),
\label{eq:intSLStep0}
\\
\bm{\mu}^{\text{SL}}_{k,t}(\theta) & \propto \prod_{j=1}^K \left[\bm{\psi}^{\text{SL}}_{j,t}(\theta)\right]^{a_{jk}}.
\label{eq:socialLearningStep0}
\end{align}
\end{subequations}
In step \eqref{eq:intSLStep0}, each agent $k$ performs a {\em local Bayesian update} by computing the likelihood model using the fresh data sample $\bm{x}_{k,t}$, yielding an intermediate belief $\bm{\psi}^{\text{SL}}_{k,t}(\theta)$. In the second step \eqref{eq:socialLearningStep0}, agent $k$ pools the beliefs of its neighbors $j\in\mathcal{N}_k$ by using a geometric average weighted by the coefficients $a_{jk}$.$\hfill\square$

\textbf{\textit{Doubly Non-Bayesian Social Learning}.}
Observe that the cooperation step implemented by traditional social learning is non-Bayesian.
The recently proposed NB$^2$ strategy~\cite{mattaSocialLearningOpinion2024,HuangWang,bordignonSocialLearningNonBayesian2023} also implements a non-Bayesian update in the self-learning step, whence the qualification \emph{doubly non-Bayesian}.
Specifically, the NB$^2$ strategy amounts to:
\begin{subequations}
\begin{align}
\bm{\psi}^{\text{NB$^2$}}_{k,t}(\theta) & \propto \bm{\mu}^{\text{NB$^2$}}_{k,t-1}(\theta)\, \ell^{\gamma_k}_{k}(\bm{x}_{k,t}|\theta),
\label{eq:NB2intSLStep0}
\\
\bm{\mu}^{\text{NB$^2$}}_{k,t}(\theta) & \propto \prod_{j=1}^K \left[\bm{\psi}^{\text{NB$^2$}}_{j,t}(\theta)\right]^{a_{jk}},
\label{eq:NB2socialLearningStep0}
\end{align}
\end{subequations}
where $\gamma_k > 0$.
Comparing \eqref{eq:NB2intSLStep0} against \eqref{eq:intSLStep0}, we see that the likelihood is now raised to some value $\gamma_k$, which, as we will see later, has an impact on the decision performance.$\hfill\square$

\subsection{Consistent Learning}
One fundamental goal in social learning is to guarantee \emph{consistency}, that is, the capacity of \emph{all} agents to learn the truth as $t\rightarrow\infty$.

Let us denote by $\bm{\mu}_{k,t}$ a belief vector arising from some social learning strategy, not necessarily from \eqref{eq:intSLStep0}-\eqref{eq:socialLearningStep0} or \eqref{eq:NB2intSLStep0}-\eqref{eq:NB2socialLearningStep0}. 
Let
\begin{equation}
\frac 1 t \log\frac{1}{\bm{\mu}_{k,t}(\theta)}
\xrightarrow[t \rightarrow \infty]{\textnormal{a.s.}} \rho(\theta,\theta^{\bullet})>0\quad\forall \theta\neq\thetatrue,
\label{eq:RRdef}
\end{equation}
where $\xrightarrow[t \rightarrow \infty]{\textnormal{a.s.}}$ denotes almost-sure convergence as $t\rightarrow\infty$.
Condition \eqref{eq:RRdef} means that the belief about the erroneous hypothesis $\theta$ converges almost surely to $0$ at the exponential rate $\rho(\theta,\theta^{\bullet})$, which is accordingly called the \emph{rejection rate} (of $\theta$ when $\thetatrue$ is true). Equation \eqref{eq:RRdef} implies in particular that
\begin{equation}
\bm{\mu}_{k,t}(\theta^{\bullet}) \xrightarrow[t \rightarrow \infty]{\textnormal{a.s.}} 1\qquad k=1,2,\ldots,K.
\end{equation} 
Now, returning to algorithms \eqref{eq:intSLStep0}-\eqref{eq:socialLearningStep0} and \eqref{eq:NB2intSLStep0}-\eqref{eq:NB2socialLearningStep0}, when the initial beliefs at all agents and on all hypotheses are nonzero, and the combination matrix $A$ is left-stochastic and irreducible, 
it holds that~\cite[Thm. 5.2]{mattaSocialLearningOpinion2024}
\begin{equation}
\frac 1 t \log\frac{1}{\bm{\mu}^{\text{SL}}_{k,t}(\theta)}
\xrightarrow[t \rightarrow \infty]{\textnormal{a.s.}}
\sum_{j=1}^K v_j  D\left( \ell_{j,\theta^{\bullet}} || \ell_{j,\theta}\right) \triangleq 
\rho^{\text{SL}}(\theta, \theta^{\bullet}),
\label{eq:tradRejRate}
\end{equation}
where $D(\ell_{j,\theta^{\bullet}} || \ell_{j,\theta})$ denotes the Kullback-Leibler (KL) divergence between $\ell_j(\cdot|\theta^{\bullet})$ and $\ell_j(\cdot|\theta)$.
Following similar derivations to~\cite[Thm. 5.2]{mattaSocialLearningOpinion2024}, it is immediate to show that 
\begin{equation}
\frac 1 t \log\frac{1}{\bm{\mu}^{\text{NB$^2$}}_{k,t}(\theta)}
\xrightarrow[t \rightarrow \infty]{\textnormal{a.s.}}
\sum_{j=1}^K \gamma_j v_j D\left( \ell_{j,\theta^{\bullet}} || \ell_{j,\theta}\right)
\triangleq
\rho^{\text{NB$^2$}}(\theta, \theta^{\bullet}) . 
\label{eq:NBRejRate}
\end{equation}
For traditional SL and  NB$^2$, it is seen that the limiting values on the RHS of \eqref{eq:tradRejRate} and \eqref{eq:NBRejRate} stay positive (which guarantees consistency) under the assumption that, for any wrong hypothesis $\theta\neq\thetatrue$, there exists at least one agent whose marginal likelihood allows to distinguish $\theta$ from $\thetatrue$. This condition is called \emph{global identifiability}.
\begin{assumption}[\textbf{Global identifiability}]
\label{ass:globalIdent}
For each $\theta\neq\theta^{\bullet}$, at least one agent $k$ has $D(\ell_{k,\theta^{\bullet}} || \ell_{k,\theta}) > 0$. Moreover, the KL divergence between any pair of hypotheses is finite.$\hfill\square$ 
\end{assumption}

For the centralized Bayesian strategy, the condition of nonzero initial beliefs is replaced by the condition of a nonzero prior ($\pi(\theta)>0$), while the assumptions relative to the combination matrix 
are immaterial since no network is involved. 
It is known that, under global identifiability\footnote{
From the chain rule for the KL divergence~\cite{coverthomas}, global identifiability from Assumption~\ref{ass:globalIdent} implies that $D\left( \ell_{\text{tot},\theta^{\bullet}} || \ell_{\text{tot},\theta}\right)>0$. 
}~\cite{mattaSocialLearningOpinion2024},
\begin{equation}
\frac 1 t \log\frac{1}{\bm{\mu}^{\star}_{t}(\theta)}
\xrightarrow[t \rightarrow \infty]{\textnormal{a.s.}}
D\left( \ell_{\text{tot},\theta^{\bullet}} || \ell_{\text{tot},\theta}\right)\triangleq
\rho^{\star}(\theta, \theta^{\bullet})>0.
\label{eq:centRejRate}
\end{equation}

\section{Analysis of the Rejection Rate}
\label{sec:RR}

Some previous works compared the performance of different SL strategies by comparing the rejection rates of these strategies.
In the following analysis we demonstrate that the rejection rate cannot be used as a performance measure and, hence, it cannot be used for comparing algorithms.
To this end, we start by establishing the following result.

\begin{lemma}[\textbf{Any RR is achievable}]
\label{lem:anyrate}
Consider some arbitrary algorithm that generates belief vectors $\bm{\mu}_{k,t}$ and for which
\eqref{eq:RRdef} holds. 
Accordingly, for each pair of hypotheses $(\theta,\thetatrue)$, the rejection rate for this algorithm is $\rho(\theta,\thetatrue)$. Now,
for any $b>1$, there exists another belief vector $\tilde{\bm{\mu}}_{k,t}$ that achieves the larger (and, hence, superior) rejection rate $b \cdot\rho(\theta,\theta^{\bullet})$.
\end{lemma}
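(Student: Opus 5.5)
The construction is a \emph{power-and-renormalize} transformation of the given beliefs. For the given $b>1$, define
\begin{equation}
\tilde{\bm{\mu}}_{k,t}(\theta)\triangleq\frac{\left[\bm{\mu}_{k,t}(\theta)\right]^{b}}{\sum_{\theta'\in\Theta}\left[\bm{\mu}_{k,t}(\theta')\right]^{b}}.
\end{equation}
This is a valid probability vector, and it can be computed locally by agent $k$ from its own output. Since $b>0$, the map preserves the ordering of the entries, so the decision (the maximizing hypothesis) is unchanged at every $t$. This observation is not needed for the lemma itself, but it is what will make the result paradoxical.

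To compute the rejection rate, I take logarithms:
\begin{equation}
\frac1t\log\frac{1}{\tilde{\bm{\mu}}_{k,t}(\theta)}=b\cdot\frac1t\log\frac{1}{\bm{\mu}_{k,t}(\theta)}+\frac1t\log\sum_{\theta'\in\Theta}\left[\bm{\mu}_{k,t}(\theta')\right]^{b}.
\end{equation}
By \eqref{eq:RRdef}, the first term converges almost surely to $b\,\rho(\theta,\thetatrue)$. The main step is to show that the second term vanishes almost surely.

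For this, I bound the normalizing sum. Since it includes the term $\theta'=\thetatrue$, and each entry lies in $[0,1]$ with $b>1$, we have
\begin{equation}
\left[\bm{\mu}_{k,t}(\thetatrue)\right]^{b}\le\sum_{\theta'\in\Theta}\left[\bm{\mu}_{k,t}(\theta')\right]^{b}\le H.
\end{equation}
The implication of \eqref{eq:RRdef} that $\bm{\mu}_{k,t}(\thetatrue)\to1$ almost surely makes the lower bound tend to $1$. Hence the logarithm of the sum is almost surely bounded, and dividing by $t$ sends it to zero. In fact the upper bound can be sharpened to $\sum_{\theta'}\bm{\mu}_{k,t}(\theta')=1$, so the sum converges to $1$ outright, but boundedness already suffices.

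Combining the two terms gives $\frac1t\log(1/\tilde{\bm{\mu}}_{k,t}(\theta))\to b\,\rho(\theta,\thetatrue)$ almost surely for every $\theta\neq\thetatrue$, which proves the claim. The only potential subtlety is the almost-sure control of the normalizer, and it is handled by the convergence $\bm{\mu}_{k,t}(\thetatrue)\to1$ that already follows from \eqref{eq:RRdef}, so I do not expect a genuine obstacle.
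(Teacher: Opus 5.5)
Your proof is correct and uses the same power-and-renormalize construction $\tilde{\bm{\mu}}_{k,t}(\theta)\propto\bm{\mu}^{b}_{k,t}(\theta)$ as the paper. The only cosmetic difference is how the normalizer is handled. The paper works with the log-ratio $\log[\tilde{\bm{\mu}}_{k,t}(\theta^{\bullet})/\tilde{\bm{\mu}}_{k,t}(\theta)]$, where the normalizer cancels, and then invokes $\tilde{\bm{\mu}}_{k,t}(\theta^{\bullet})\to 1$ a.s.; you instead bound the normalizer directly. Both routes rest on the same fact, $\bm{\mu}_{k,t}(\theta^{\bullet})\to 1$ a.s.
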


\begin{IEEEproof}
Consider an arbitrary real number $b>1$ and define, for all $\theta\in\Theta$, the beliefs
\begin{equation}
\tilde{\bm{\mu}}_{k,t}(\theta)\propto\bm{\mu}^{b}_{k,t}(\theta).
\label{eq:revealingtransformation}
\end{equation}
For instance, we can modify the output $\bm{\mu}_{k,t}$ of the given algorithm to obtain $\tilde{\bm{\mu}}_{k,t}$.
Using \eqref{eq:RRdef} and \eqref{eq:revealingtransformation}, we have
\begin{equation}
\frac 1 t \log\frac{\tilde{\bm{\mu}}_{k,t}(\theta^{\bullet})}{\tilde{\bm{\mu}}_{k,t}(\theta)}
=
\frac b t \log\frac{\bm{\mu}_{k,t}(\theta^{\bullet})}{\bm{\mu}_{k,t}(\theta)}
\xrightarrow[t \rightarrow \infty]{\textnormal{a.s.}} b \cdot\rho(\theta,\theta^{\bullet}).
\label{eq:intermediatelemmaproof}
\end{equation}
Since \eqref{eq:RRdef} holds by assumption, we know that $\bm{\mu}_{k,t}(\theta^{\bullet})$ converges almost surely to $1$, which in view of \eqref{eq:revealingtransformation} implies that $\tilde{\bm{\mu}}_{k,t}(\theta^{\bullet})
\xrightarrow[t \rightarrow \infty]{\textnormal{a.s.}} 1$ as well. Using this result in \eqref{eq:intermediatelemmaproof}, we conclude that
\begin{equation}
\frac 1 t \log\frac{1}{\tilde{\bm{\mu}}_{k,t}(\theta)}
\xrightarrow[t \rightarrow \infty]{\textnormal{a.s.}} b \cdot\rho(\theta,\theta^{\bullet}),
\end{equation}
i.e., $\tilde{\bm{\mu}}_{k,t}$ attains the claimed rejection rate $b\cdot \rho(\theta,\thetatrue)$.
\end{IEEEproof}

Lemma~\ref{lem:anyrate} shows that, given a decision strategy that achieves a certain rejection rate, it is always possible to modify it to achieve an arbitrarily larger rate. In other words, for a given decision-making problem, any RR is achievable. 
This suggests that the use of the RR as a performance index is generally problematic. 
In fact, we now show that several paradoxes arise when the RR is used to evaluate the decision-making performance. We start by a simple yet informative example.

\begin{example}[\textbf{Quantized data}]
\label{ex:onebit}
Let $\bm{x}_{k,t}$ be Gaussian random variables with unit variance and means $m_{\theta_1}$ and $m_{\theta_2}$ under hypotheses $\theta_1$ and $\theta_2$, respectively. These variables are assumed independent and identically distributed (iid) across space and time, i.e., with respect to indices $k$ and $t$.  
Consider a quantization threshold $\eta\in\mathbb{R}$ and define the quantized variable $\bm{q}_{k,t}=\mathbb{I}[\bm{x}_{k,t}>\eta]$,
where $\mathbb{I}$ denotes the indicator function, which is equal to $0$ if the condition defined by its argument is false, and is equal to $1$ otherwise. 
In other words, $\bm{q}_{k,t}$ is a $1$-bit version of $\bm{x}_{k,t}$. The binary variable $\bm{q}_{k,t}$ is characterized by the distribution
\begin{equation}
\mathbb{P}[\bm{q}_{k,t}=1|\theta]=Q\left(\eta-m_\theta\right),\qquad \theta\in\{\theta_1,\theta_2\},
\label{eq:quantlikelifirstexpression}
\end{equation}
where $Q$ denotes the complementary cumulative distribution function of the standard normal.
Now, let $\bm{\mu}^{(q)}_{k,t}(\theta)$ be the belief of agent $k$ at time $t$ computed with the traditional SL scheme {\eqref{eq:intSLStep0}-\eqref{eq:socialLearningStep0}} by using the likelihood of the quantized variables, denoted by $\ell^{(\text{q})}(y|\theta)=\mathbb{P}[\bm{q}_{k,t}=y|\theta]$, for $y\in\{0,1\}$ and $\theta\in\{\theta_1,\theta_2\}$. 
From \eqref{eq:tradRejRate} we have, for all $\theta\neq\theta^{\bullet}$, 
\begin{equation}
\frac 1 t \log\frac{1}{\bm{\mu}^{(q)}_{k,t}(\theta)} \xrightarrow[t \rightarrow \infty]{\textnormal{a.s.}} D\Big(\ell^{(\text{q})}_{\theta^{\bullet}}||\ell^{(\text{q})}_\theta\Big),
\label{eq:quantexRR}
\end{equation}
where the Perron vector entries $v_j$ disappear since the KL divergences are identical across the agents due to the identical distribution. 
In view of \eqref{eq:RRdef}, we conclude from  \eqref{eq:quantexRR} that the KL divergence on the RHS represents the rejection rate of the SL scheme based on the quantized variables.

Likewise, for the centralized Bayesian strategy applied to the original, unquantized data $\bm{x}_{k,t}$, from \eqref{eq:centRejRate} we get a rejection rate equal to $K \!\cdot\! D(\ell_{\theta^{\bullet}}||\ell_{\theta})$, due to the additivity of the KL divergence for independent observations. Note that the likelihoods for the unquantized variables have been denoted by $\ell_\theta$ and $\ell_{\thetatrue}$, in place of $\ell_{k,\theta}$ and $\ell_{k,\thetatrue}$, in view of the identical distribution across the agents..

Now, the data processing inequality for the KL divergence implies~\cite{KLDPI}
\begin{equation}
D(\ell_{\theta^{\bullet}}||\ell_{\theta})\geq D(\ell^{(\text{q})}_{\theta^{\bullet}}||\ell^{(\text{q})}_\theta).
\end{equation} 
On the other hand, from Lemma \ref{lem:anyrate} we know that we can modify the scheme based on the quantized variables by constructing a belief
\begin{equation}
\tilde{\bm{\mu}}^{(q)}_{k,t}(\theta)\propto \left[\bm{\mu}^{(q)}_{k,t}(\theta)\right]^b,
\label{eq:ex1modq}
\end{equation}
with $b$ chosen so as to achieve a rejection rate 
\begin{equation}
b \cdot  D(\ell^{(\text{q})}_{\theta^{\bullet}}||\ell^{(\text{q})}_\theta) \geq K \cdot  D(\ell_{\theta^{\bullet}}||\ell_{\theta}).    
\end{equation}
As a result, if the rejection rate is taken as a performance proxy, then one can construct a decentralized scheme based on quantized variables that outperforms the centralized Bayesian scheme based on the original unquantized variables. This conclusion is rather suspicious.$\hfill\square$
\end{example}

\subsection{Paradoxes Arising from the RR}

{\bf \emph{Paradox 1: Information reduction does not affect RR}.} The analysis presented in Example~\ref{ex:onebit} implies that, given an arbitrary decision strategy $\mathbb{A}$ based on some input data $\bm{x}_{k,t}$, one can construct a decision strategy $\mathbb{B}$, based on one-bit quantized data, which outperforms $\mathbb{A}$. This goes against one fundamental principle of statistical inference and information theory, namely, the data processing inequality. Applying that principle, data quantization entails (but for special/trivial cases) a reduction of information, and an inferential strategy based on the reduced data cannot outperform the best strategy based on unquantized data.$\hfill\square$

{\bf \emph{Paradox 2: Discarding information improves RR}.} 
Consider a decision-making problem where all agents are able to distinguish $\thetatrue$ from $\theta\neq \thetatrue$, with one particular agent $k^\star$ having the highest discriminative capacity, namely, an agent such that
\begin{equation}
D(\ell_{k^\star,\thetatrue}||\ell_{k^\star,\theta})>
D(\ell_{k,\thetatrue}||\ell_{k,\theta})\qquad \textnormal{for any $k\neq k^\star$}.
\end{equation}
Then, the rejection rate in \eqref{eq:tradRejRate} is maximized with the choice $v_{k^\star}=1$, which basically means that the agent $k^{\star}$ can work in isolation, with the data of all other agents being discarded. 
This strategy goes against another fundamental principle of statistical inference and information theory, according to which an optimal decision strategy never discards informative data. 
For example, consider a binary detection problem with the observations being Gaussian with different means under the different hypotheses. Assume also that the observations of different agents have different variances, with the most informative agent featuring the smallest variance. Then, the optimal decision strategy would scale the observations with weights inversely proportional to the variances, but would never discard any observation.$\hfill\square$

{\bf \emph{Paradox 3: The exact posterior is not the best belief}.} 
The exact posterior probability $\bm{\mu^\star}_{t}(\theta)$ represents the most faithful and complete statistical description of the hypotheses given the data. 
However, Lemma \ref{lem:anyrate} guarantees that we can always find a belief that attains a rejection rate larger than the centralized rejection rate $\rho^{\star}(\theta,\theta^{\bullet})$ from \eqref{eq:centRejRate}. 
This implies that, in terms of rejection rate, the exact posterior is outperformed by (infinitely many) strategies.$\hfill\square$ 

In summary, the analysis in this section demonstrates that the rejection rate is not a reliable index for the decision-making performance. 
As a matter of fact, in decision theory the RR is not considered as a performance measure.

\section{Error Probability vs. Rejection Rate}

Recalling that the decision is made by taking the hypothesis that maximizes the belief, the error probability of agent $k$ \emph{given that the true hypothesis is $\theta^{\bullet}$} is given by
\begin{equation}
\label{eq:conderr}
    p_{k,t}(\theta^{\bullet})=\mathbb{P}_{\theta^{\bullet}}[\bm{\mu}_{k,t}(\theta^{\bullet})\leq \bm{\mu}_{k,t}(\theta) \textnormal{ for at least one $\theta\neq \theta^{\bullet}$}],
\end{equation}
where the notation $\mathbb{P}_{\theta^{\bullet}}$ signifies that the true hypothesis is $\theta^{\bullet}$.
Then, the total error probability of agent $k$ is obtained by averaging over the prior distribution $\pi$, yielding
\begin{equation}
    p_{k,t}=\sum_{\theta^{\bullet}\in\Theta}\pi(\theta^{\bullet}) p_{k,t}(\theta^{\bullet}).
    \label{eq:toterrprob}
\end{equation} 
By inspecting the definition of the rejection rate in \eqref{eq:RRdef}, it is easy to fall into the misconception that an increased rejection rate corresponds to a reduced error probability. However, this conclusion is false, as we now show.

\begin{figure*}[ht]
    \centering        \includegraphics[width=0.325\linewidth]{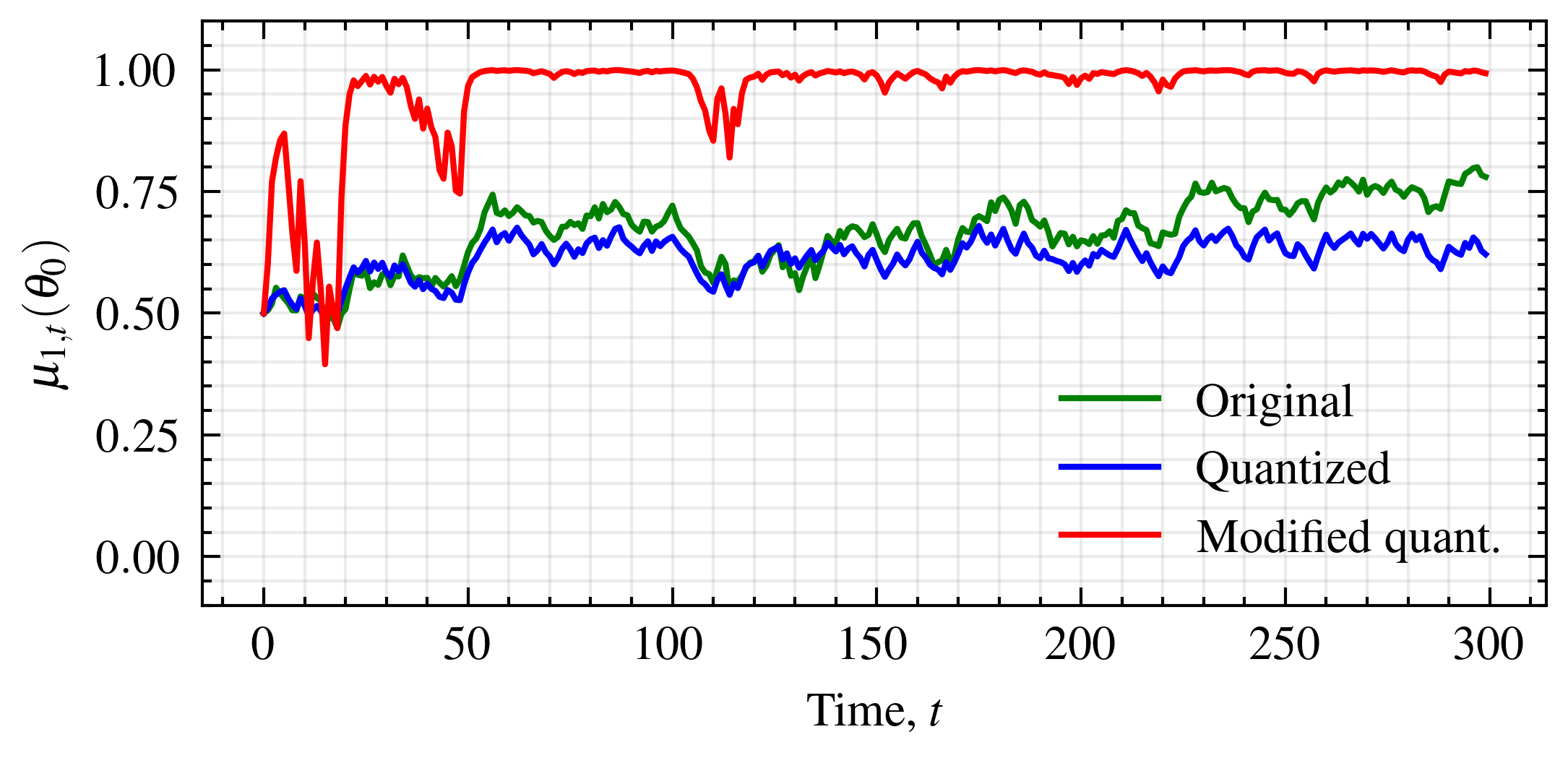}
    \hfill
    \includegraphics[width=0.325\linewidth]{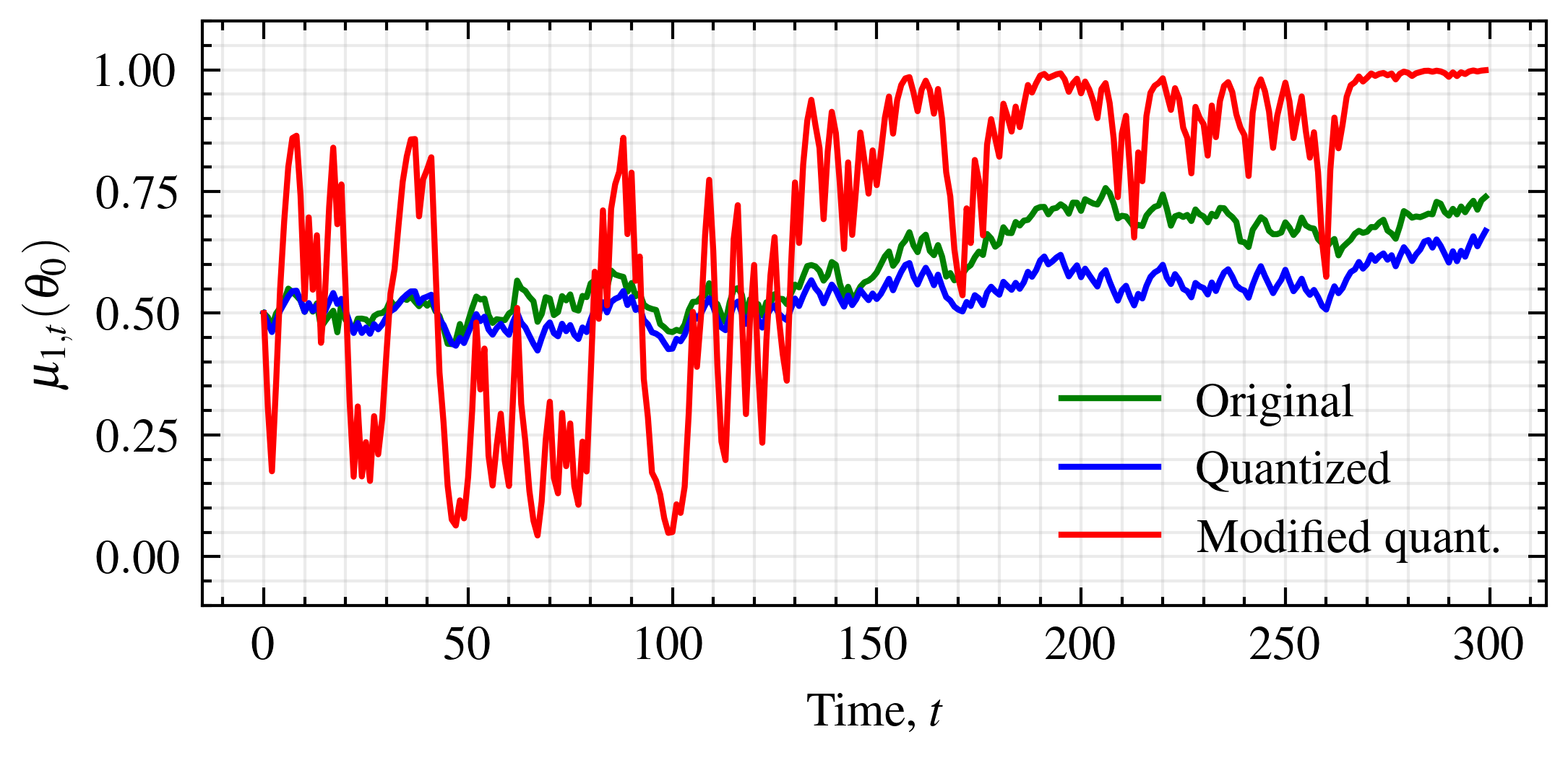}
    \hfill
    \includegraphics[width=0.325\linewidth]{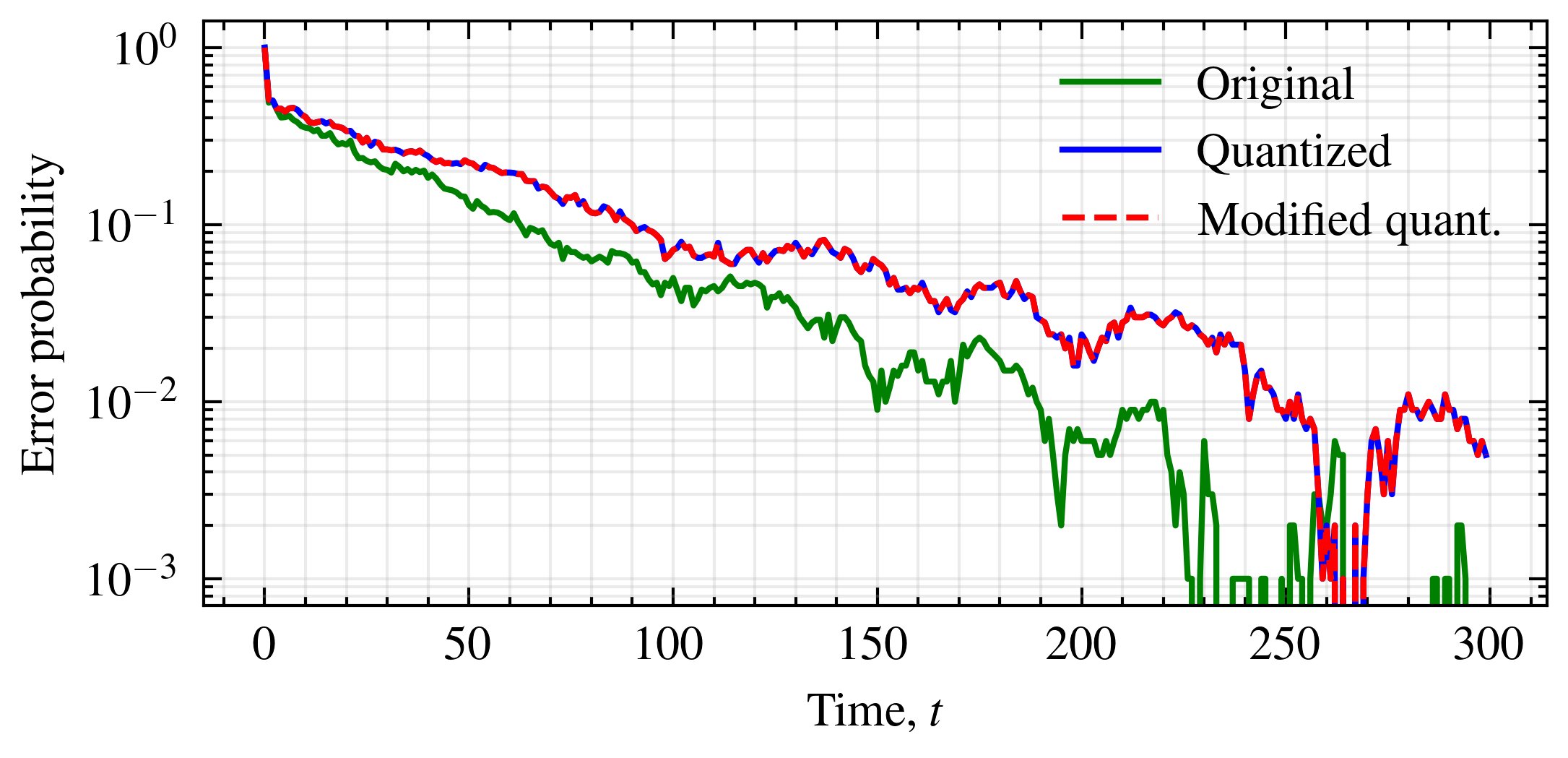}
    \caption{
    \emph{Rejection rate vs. error probability.}
    Traditional SL behavior in the setting of Example~\ref{ex:onebit}, for the network topology shown in the inset panel of Fig.~\ref{fig:th1unifbel}. (\emph{Left}). One realization of the belief $\bm{\mu}_{1,t}(\theta^{\bullet})$ where, after a random time instant $\bm{t}_0 \approx 25$, the beliefs of the compared strategies follow the ordering dictated by the rejection rates. (\emph{Middle}). A different realization of the belief $\bm{\mu}_{1,t}(\theta^{\bullet})$, where the strategies follow the ordering of the rejection rates after a longer time, i.e., $\bm{t}_0 \approx 130$. Note that the modified quantized implementation is characterized by a more significant variability of the beliefs with respect to traditional SL. (\emph{Right}). Error probability curves obtained by averaging over multiple realizations. The quantized implementations share the same error probability curves in view of \eqref{eq:equivEvents}, but they feature a higher error probability compared to traditional SL. }
    \label{fig:RRvserrprob}    
\end{figure*}

Consider two social learning strategies $1$ and $2$, yielding two belief vectors $\bm{\mu}'_{k,t}$ and $\bm{\mu}''_{k,t}$, whose respective rejection rates fulfill the ordering relation $\rho''(\theta,\thetatrue)>\rho'(\theta,\thetatrue)$ for all $\theta\neq\thetatrue$.
If the beliefs $\bm{\mu}'_{k,t}$ and $\bm{\mu}''_{k,t}$ were deterministic, then from \eqref{eq:RRdef} we would conclude that, from a certain \emph{deterministic} time instant $t_0$ onward, $\bm{\mu}''_{k,t}(\theta)<\bm{\mu}'_{k,t}(\theta)$ for all $\theta\neq\thetatrue$. Under this condition, whenever strategy 1 selects correctly $\thetatrue$, so does strategy 2. We would then conclude that the error probability associated with strategy 2 cannot be higher than the error probability associated with strategy 1. 
However, we now show that this conclusion is invalidated by the fact that the beliefs are in fact \emph{random functions}, with the convergence in \eqref{eq:RRdef} being a \emph{stochasic convergence}, specifically, an almost-sure convergence. 

The fact that we deal with almost-sure convergence implies that, for (almost) any realization, there exists a \emph{random} time instant $\bm{t}_0$ such that, for $t\geq \bm{t}_0$, $\bm{\mu}''_{k,t}(\theta)<\bm{\mu}'_{k,t}(\theta)$ for all $\theta\neq\thetatrue$. In other words, the time epoch $\bm{t}_0$ depends on the realization, which changes completely the perspective in terms of error probability.
To illustrate this point better, it is convenient to refer back to Example \ref{ex:onebit}. 
Consider first the MAP rule based on \eqref{eq:truepost} (i.e., on the original unquantized data $\bm{x}_{k,t}$). 
We know that this rule minimizes the error probability. 
Consider then the two beliefs introduced in Example \ref{ex:onebit}: the belief $\bm{\mu}^{(\text{q})}_{k,t}$ corresponding to traditional SL applied to \emph{quantized} data and its modified version $\tilde{\bm{\mu}}^{(\text{q})}_{k,t}$ designed to outperform the MAP rule in terms of rejection rate. 

In the first two panels of Fig.~\ref{fig:RRvserrprob}, we display two realizations of the beliefs of the aforementioned three strategies, evaluated at $\thetatrue$. In both realizations, after sufficient time we see that the ordering of the beliefs follows the ordering dictated by the rejection rates, hence, the strategy with the highest belief is the modified quantized strategy. However, in the realization of the middle panel, the time for reaching this condition is significantly larger than the time observed in the leftmost panel. Moreover, in the middle panel we see that, before convergence, the belief of the modified quantized strategy exhibits a much greater variability. 
In particular, there are many time instants where the modified quantized strategy makes mistakes (the belief about the true hypothesis is even close to zero), whereas the MAP rule does not. 

Averaging over multiple realizations, we obtain the error probability curves displayed in the rightmost panel, where we observe the following behavior. 
First, despite the modified quantized rule exhibiting a rejection rate larger than the quantized strategy, both strategies have the same error probability. 
This is a consequence of the following equivalence between events: 
\begin{equation}
\left\{\tilde{\bm{\mu}}^{(\text{q})}_{k,t}(\theta^{\bullet})\leq\tilde{\bm{\mu}}^{(\text{q})}_{k,t}(\theta)\right\}
\Leftrightarrow
\left\{\bm{\mu}^{(\text{q})}_{k,t}(\theta^{\bullet})\leq\bm{\mu}^{(\text{q})}_{k,t}(\theta)\right\},
\label{eq:equivEvents}
\end{equation}
which follows from \eqref{eq:ex1modq}.
Second, the MAP rule has the smallest error probability. 
This is because the quantized strategies are based on transformed data (with loss of information) and the MAP rule is optimal.

In summary, it is wrong to assume that increasing the RR reduces the error probability. 
The RR represents indeed the \emph{limiting average value} of a decision statistic (the log belief ratio). Since it is a limiting average, it cannot account for the random variability of the beliefs and, consequently, for the number of mistaken decisions made over time. 
This aspect is readily captured by referring to a classic example from decision-making theory, namely, a binary detection problem. Consider two decision statistics used to solve the detection problem. The goal is to establish which decision statistic performs better. 
Comparing the \emph{averages} of the decision statistics to compare the two strategies would be wrong, as it is well-known that the expected values do not carry useful information about the decision performance, which requires to consider at least the variance, or more complete descriptors such as the error probability.

\section{Sensitivity to the Network Connectivity and the Belief Initialization}
\label{sec:netconnandprior}

In decentralized learning, it is often the case that, in the long term, the performance is equalized across the agents and the initial system state does not play a role.
This is true, for example, for decentralized estimation/regression~\cite{sayednewbooks}. 
This is also true in decentralized decision-making, with reference to 
the so-called \emph{error exponent}, which represents the exponential rate at which the error probability converges to $0$ as $t \rightarrow \infty$ (\emph{not} the convergence rate of the beliefs quantified by the rejection rate). Formally,
\begin{equation}
p_{k,t}= \exp\big(-E \, t + o(t) \big),
\label{eq:LD}
\end{equation}
where $o(t)$ is a function such that $\lim_{t\rightarrow\infty} o(t)/t=0$. For the strategies examined in this work, it has been shown that the error exponent $E$ is equalized across the agents and unaffected by the prior distribution and the initial beliefs~\cite{mattaSocialLearningOpinion2024}.
The exact expression for the exponent $E$ can be computed by resorting to the \emph{theory of large deviations} \cite{demboLargeDeviationsTechniques2009, denhollanderLargeDeviations2000}, which
has been applied before to distributed detection systems, e.g., in problems involving sensor networks \cite{SungTongPoor,Kar2018}, in decentralized binary detection problems \cite{MouraLDnonGauss, MouraDIperformance}, and more specifically in social learning \cite{mattaSocialLearningOpinion2024,lalithaSocialLearningDistributed2018}. 

Note that the term $o(t)$ in \eqref{eq:LD} can be a constant or can diverge to $\infty$ as $t\rightarrow\infty$.
As a result, two strategies that share the same error exponent can still have different error probabilities because of differences in these higher-order terms. 
This implies that, even if the exponents are equalized across the agents and are not influenced by the prior and the initialization, this need not be the case for the actual error probability. 

Deriving a general analytical expression to highlight this result is a formidable task. The analysis is viable for the Gaussian case, examined in the next theorem.
The goal of the theorem is to compare the error probability $p_{k,t}$, relative to agent $k$ in a social learning strategy, against the optimal error, namely, the probability $p^\star_t$ attained by the centralized Bayesian system. 
Specifically, we will evaluate the limit, as $t\rightarrow\infty$, of the ratio $p_{k,t}/p^\star_t$.
To make the comparison meaningful, we need to consider social learning implementations that match the error exponent of the optimal centralized strategy, otherwise the limit would trivially be equal to $+\infty$. 
It was shown in~\cite{mattaSocialLearningOpinion2024, bordignonSocialLearningNonBayesian2023,HuangWang} that, under Assumptions~\ref{ass:nonzerobel}, \ref{ass:primitive_combination}, and~\ref{ass:globalIdent}, and under statistical independence over time and across the agents, the same error exponent as the optimal centralized strategy is attained: $i)$ by traditional SL when the combination matrix is doubly stochastic; and $ii)$ for left-stochastic matrices, by the NB$^2$ strategy with weights $\gamma_k=1/v_k$.

\begin{theorem}[\textbf{Error probabilities for binary Gaussian problems}]
\label{th:HyperCos}
Let Assumptions~\ref{ass:nonzerobel},~\ref{ass:primitive_combination}, and~\ref{ass:globalIdent} hold.
Let $\bm{x}_{k,t}$ be Gaussian variables, independent across space and time (i.e., with respect to indices $k$ and $t$) with means $m_k(\theta_1)$ and $m_k(\theta_2)$ under hypotheses $\theta_1$ and $\theta_2$, respectively, and with variance $s^2_k$ under both hypotheses. 
Let
\begin{equation}
\Delta_k =\frac{\big(m_k(\theta_2)-m_k(\theta_1)\big)^2}{2s^2_k}
\label{eq:KLdeltadef}
\end{equation} 
be the KL divergence between the two hypotheses \cite{coverthomas} and define the error term
\begin{equation}
\varepsilon_k\triangleq \frac 1 4 \sum_{\tau=1}^\infty\sum_{j=1}^K \left(
\frac{[A^{\tau}]_{jk}}{v_j} - 1
\right)^2\,\Delta_j.
\label{eq:overallneterr}
\end{equation}
Let $p^{\star}_t$ denote the error probability relative to the optimal centralized Bayesian system employing the MAP rule.
Likewise, let $p_{k,t}$ be the error probability of agent $k$ at time $t$, either relative to the traditional SL strategy with doubly stochastic $A$ or to the NB$^2$ strategy with $\gamma_k=1/v_k$ (and left-stochastic $A$). 
Furthermore, let 
\begin{equation}
\xi\triangleq \log\frac{\pi(\theta_1)}{\pi(\theta_2)},\qquad \xi_{\mathrm{net}}\triangleq 
\sum_{j=1}^K v_j
\log\frac{\mu_{j,0}(\theta_1)}{\mu_{j,0}(\theta_2)},
\label{eq:xixinet}
\end{equation}
where $\pi(\theta)$ is the true prior probability of observing $\theta$, while $\mu_{j,0}(\theta)$ is the initial belief about $\theta$ employed by agent $j$.
Then, we have the following result:
\begin{equation} 
\lim_{t\rightarrow\infty}\frac{p_{k,t}}{p^\star_t}
=
\exp\left(
\varepsilon_k
\right)\,
\cosh
\left(
\frac
{\xi-\xi_{\mathrm{net}}/\alpha}
{2}
\right),
\label{eq:th1main}
\end{equation}
where $\alpha=1/K$ for the traditional SL strategy with doubly stochastic $A$ and $\alpha=1$ for the NB$^2$ strategy.
\end{theorem}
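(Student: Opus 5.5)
The plan is to exploit the fact that, in the Gaussian shift-in-mean model, every decision statistic involved is an affine function of the data and is therefore exactly Gaussian. The error probabilities can then be written as $Q$-functions, and the ratio can be handled with the Mills-ratio asymptotics $Q(x)=\frac{\phi(x)}{x}\big(1+O(x^{-2})\big)$.

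\emph{Step 1: the local log-likelihood ratios.} Let $\bm{\lambda}_{j,\tau}=\log\frac{\ell_j(\bm{x}_{j,\tau}|\theta_1)}{\ell_j(\bm{x}_{j,\tau}|\theta_2)}$. Under $\theta_1$ this is Gaussian with mean $\Delta_j$ and variance $2\Delta_j$. Under $\theta_2$ it has mean $-\Delta_j$ and the same variance.

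\emph{Step 2: the centralized statistic.} The centralized MAP statistic is $\xi+\sum_{\tau\le t}\sum_j\bm{\lambda}_{j,\tau}$, with mean $\xi\pm tD$ and variance $2tD$, where $D=\sum_j\Delta_j$.

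\emph{Step 3: the social-learning statistic.} Unrolling the recursions \eqref{eq:intSLStep0}--\eqref{eq:socialLearningStep0} or \eqref{eq:NB2intSLStep0}--\eqref{eq:NB2socialLearningStep0} in the log-ratio domain gives
\begin{equation*}
\log\frac{\bm{\mu}_{k,t}(\theta_1)}{\bm{\mu}_{k,t}(\theta_2)}
=\sum_j [A^t]_{jk}\log\frac{\mu_{j,0}(\theta_1)}{\mu_{j,0}(\theta_2)}
+\sum_{\tau=1}^t\sum_j [A^{t-\tau+1}]_{jk}\,\gamma_j\,\bm{\lambda}_{j,\tau}.
\end{equation*}
For traditional SL with doubly stochastic $A$, I multiply the statistic by $K=1/\alpha$, which does not change the decision. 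For NB$^2$, I use $\gamma_j=1/v_j$. In both cases the rescaled statistic takes the common form $c_t+\sum_{s=1}^t\sum_j w_{j,s}\bm{\lambda}_{j,\cdot}$. Here $w_{j,s}=[A^s]_{jk}/v_j$ and $c_t\to\xi_{\mathrm{net}}/\alpha$ by Lemma~\ref{lem:classicmatlem}.

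\emph{Step 4: mean and variance of the SL statistic.} Write $w=1+(w-1)$. The mean is $c_t\pm(tD+d_t)$ and the variance is $2(tD+2d_t+e_t)$, where
\begin{equation*}
d_t=\sum_{s\le t}\sum_j(w_{j,s}-1)\Delta_j,\qquad e_t=\sum_{s\le t}\sum_j(w_{j,s}-1)^2\Delta_j .
\end{equation*}
Lemma~\ref{lem:classicmatlem} gives $|w_{j,s}-1|\le c\lambda^s/v_j$. Hence both series converge, with $d_t\to d$ and $e_t\to 4\varepsilon_k$. This also shows that $\varepsilon_k$ is finite.

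\emph{Step 5: exponent expansion under $\theta_1$.} Ties have probability zero, so the conditional error is $Q\big((c_t+tD+d_t)/\sqrt{2(tD+2d_t+e_t)}\big)$. Expanding the squared argument over two gives
\begin{equation*}
\frac{(c_t+tD+d_t)^2}{4(tD+2d_t+e_t)}=\frac{tD}{4}+\frac{c_t+d_t}{2}-\frac{2d_t+e_t}{4}+o(1)=\frac{tD}{4}+\frac{c_t}{2}-\frac{e_t}{4}+o(1).
\end{equation*}
The terms in $d_t$ cancel exactly, so only the second-order quantity $e_t$ survives. The $1/x$ prefactors of the SL and centralized $Q$-functions have ratio tending to $1$.

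\emph{Step 6: comparison with the centralized error.} The centralized error under $\theta_1$ has exponent $tD/4+\xi/2+o(1)$. Therefore
\begin{equation*}
\frac{\mathbb{P}_{\theta_1}[\text{SL error}]}{G_t}\to e^{\varepsilon_k}e^{-\xi_{\mathrm{net}}/(2\alpha)},
\end{equation*}
with $G_t$ the common Gaussian tail $\phi(\sqrt{tD/2})/\sqrt{tD/2}$. By symmetry, under $\theta_2$ the sign of $c_t$ and of $\xi$ flips.

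\emph{Step 7: averaging over the prior.} Use $\pi(\theta_1)e^{-\xi/2}=\pi(\theta_2)e^{\xi/2}=\sqrt{\pi(\theta_1)\pi(\theta_2)}$. Then $p^\star_t\sim 2\sqrt{\pi(\theta_1)\pi(\theta_2)}\,G_t$, while
\begin{equation*}
p_{k,t}\sim\sqrt{\pi(\theta_1)\pi(\theta_2)}\,e^{\varepsilon_k}\big(e^{(\xi-\xi_{\mathrm{net}}/\alpha)/2}+e^{-(\xi-\xi_{\mathrm{net}}/\alpha)/2}\big)\,G_t.
\end{equation*}
Taking the ratio yields \eqref{eq:th1main}.

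\emph{Main obstacle.} The delicate step is Step 5: the expansion must be rigorous to $o(1)$ accuracy in the exponent, because the $O(1)$ shifts $c_t$, $d_t$, $e_t$ are exactly what determine the limit. I would write each error probability as $\phi(x_t)/x_t\cdot(1+O(x_t^{-2}))$ with $x_t\sim\sqrt{tD/2}$. I would then control $x_t^2/2$ explicitly using the convergence rates of $c_t$, $d_t$, $e_t$ (geometric, by Lemma~\ref{lem:classicmatlem}), so that all residual terms are $O(1/t)$.
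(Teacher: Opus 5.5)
Your proposal is correct and follows essentially the same route as the paper's proof: the exact Gaussian law of the unrolled log-belief ratio, $Q$-function expressions for the conditional errors, and an $O(1)$-accurate expansion of the squared argument controlled by Mills-ratio bounds. Your explicit cancellation of $d_t$ is the paper's identity $-\varepsilon_k'/(2\alpha)+\varepsilon_k''/(8\alpha^2)=\varepsilon_k$ in rescaled form, and the only differences are cosmetic: you normalize each conditional error by the common tail $G_t$ and rescale by $1/\alpha$ up front via $w_{j,s}=[A^s]_{jk}/v_j$, whereas the paper sandwiches the ratio between explicit $\limsup$ and $\liminf$ bounds.
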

\begin{IEEEproof}
See Appendix~\ref{app:1}.
\end{IEEEproof}

The structure of \eqref{eq:th1main} suggests a sharp interpretation of the discrepancy between the decentralized and the centralized error probabilities.
In fact, formula \eqref{eq:th1main} reveals that the ratio $p_{k,t}/p^\star_t$, for large $t$, is given by the product of two terms of exponential type. 
Preliminarily, we observe that both terms are $\geq 1$, as we expect since $p^\star_t$ is the best, i.e., the minimum, error probability.  
Let us examine the two terms more closely.

\subsection{Network Error}
The first term on the RHS of \eqref{eq:th1main}, $\exp(\varepsilon_k)$, is determined by the network error $\varepsilon_k$ in \eqref{eq:overallneterr}. 
Examining \eqref{eq:overallneterr}, we conclude that this error is invariably present in any social learning strategy. This is because, excluding trivial cases like a fully connected topology, the convergence of the matrix power entries $[A^\tau]_{jk}$ to the Perron vector entries $v_j$ is only reached asymptotically as $\tau\rightarrow\infty$. 
As a result, we have $\exp(\varepsilon_k)=1$ for the fully connected case, but in general we will have $\exp(\varepsilon_k)>1$. 
This means that, contrary to a widespread belief, \emph{the decentralized decision-making strategy exhibits a gap, due to imperfect network connectivity, with respect to the optimal centralized strategy. This gap does not disappear as $t\rightarrow\infty$}. 
Actually, in the literature of decentralized learning it is often implied that the optimal centralized performance is attained asymptotically. 
This turns out to be true for decentralized estimation problems, where the natural performance measure is the mean-square-error, and the impact of the network error term actually disappears. However, we have now established that this is not true for decision-making problems. 

Moreover, we see that the error $\varepsilon_k$ depends on $k$, which means that different agents may have different errors. Examining \eqref{eq:overallneterr}, this is due to the fact that the terms $[A^t]_{jk}$ depend on $k$, revealing another distinctive behavior. 
While for other decentralized learning problems the asymptotic performance is equalized across the agents, this is no longer the case for social learning, where \emph{different agents exhibit different behavior even for large $t$}.\footnote{Discrepancies among the agents' performance were observed in the context of \emph{adaptive} social learning~\cite{MattaTSIPNexactasy,mattaSocialLearningOpinion2024}. However, adaptive social learning is a different problem, because the error probability does not vanish with time, and the asymptotic analysis is performed in terms of an adaptation parameter that converges to zero.}  

The following corollary provides a uniform (i.e., independent of $k$) upper bound on $\varepsilon_k$ that summarizes an average dependence on the overall network structure.
\begin{corollary}[\textbf{Bound on the network error}]
\label{cor:1}
The error $\varepsilon_k$ defined by \eqref{eq:overallneterr} obeys the following bound:
\begin{equation}
|\varepsilon_k|\leq \frac{c_\lambda^2}{4} \frac{\lambda^2}{1 - \lambda^2} \sum_{j=1}^K \frac{\Delta_j}{v_j^2},
\end{equation}
where the quantities $\lambda$ and $c_\lambda$ are introduced in Lemma~\ref{lem:classicmatlem}.
\end{corollary}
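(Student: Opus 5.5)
The plan is to bound each summand of \eqref{eq:overallneterr} with Lemma~\ref{lem:classicmatlem} and then sum the resulting geometric series. Since all terms in \eqref{eq:overallneterr} are nonnegative ($\Delta_j\geq 0$ and squares), we have $\varepsilon_k\geq 0$ and $|\varepsilon_k|=\varepsilon_k$. It therefore suffices to upper bound $\varepsilon_k$ itself.

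The first step is to rewrite each squared term as
\begin{equation*}
\left(\frac{[A^{\tau}]_{jk}}{v_j}-1\right)^2=\frac{\left([A^{\tau}]_{jk}-v_j\right)^2}{v_j^2}.
\end{equation*}
We then invoke Lemma~\ref{lem:classicmatlem}, which is valid under Assumption~\ref{ass:primitive_combination}. Fixing any admissible $\lambda\in(0,1)$ larger than the magnitude of the second eigenvalue of $A$, the lemma gives $|[A^\tau]_{jk}-v_j|\leq c_\lambda\lambda^\tau$ for every $\tau\geq 1$ and every pair $(j,k)$. Hence each squared term is at most $c_\lambda^2\lambda^{2\tau}/v_j^2$. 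This bound is uniform in $k$, which is exactly what makes the final estimate independent of the agent.

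Next, we exchange the finite sum over $j$ with the series over $\tau$. This is legitimate because all terms are nonnegative, so Tonelli applies trivially. We are left with
\begin{equation*}
\varepsilon_k\leq \frac{c_\lambda^2}{4}\sum_{j=1}^K\frac{\Delta_j}{v_j^2}\sum_{\tau=1}^\infty \lambda^{2\tau}.
\end{equation*}
The series starts at $\tau=1$, so it sums to $\lambda^2/(1-\lambda^2)$, and this yields the claim. Convergence is guaranteed because $\lambda<1$. As a byproduct, the same argument also shows that $\varepsilon_k$ is finite, which is implicitly needed for \eqref{eq:th1main} to be meaningful.

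There is essentially no real obstacle. The only points to watch are, first, that the index in \eqref{eq:overallneterr} starts at $\tau=1$, which produces the factor $\lambda^2/(1-\lambda^2)$ rather than $1/(1-\lambda^2)$. Second, the constant $c_\lambda$ from Lemma~\ref{lem:classicmatlem} must be uniform over all entries $(j,k)$, which is how the lemma is stated.
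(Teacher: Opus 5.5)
Your proposal is correct and follows the paper's own argument: substituting the bound \eqref{eq:matbound} into \eqref{eq:overallneterr} after rewriting each square as $([A^\tau]_{jk}-v_j)^2/v_j^2$ and summing $\sum_{\tau\geq 1}\lambda^{2\tau}=\lambda^2/(1-\lambda^2)$. You spell out the details, including the nonnegativity $\varepsilon_k\geq 0$ and the uniformity of $c_\lambda$ over $(j,k)$, which the paper leaves implicit.
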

\begin{IEEEproof}
The proof is obtained by applying \eqref{eq:matbound} to \eqref{eq:overallneterr}.
\end{IEEEproof}
We see from the definition of $\lambda$ in Lemma~\ref{lem:classicmatlem} that the bound is essentially determined by the second largest-magnitude eigenvalue of the combination matrix $A$. The effect of this eigenvalue on the convergence of matrices is classically encountered in the mixing-time of Markov chains and also in consensus or diffusion problems~\cite{BoydFastMixing, Consensus, sayednewbooks, mattaSocialLearningOpinion2024}. It is known that smaller second eigenvalue magnitudes favor the diffusion of information.   

It is also useful to examine the structure of $\varepsilon_k$ for the two considered decentralized strategies, namely, traditional SL and NB$^2$. 
Recall that for traditional SL we require the combination matrix to be doubly stochastic. 
Accordingly, we rewrite \eqref{eq:overallneterr} for the case $v_j=1/K$, obtaining
\begin{equation}
\varepsilon_k\triangleq \frac 1 4 \sum_{\tau=1}^\infty\sum_{j=1}^K \left(
K[A^{\tau}]_{jk} - 1
\right)^2\,\Delta_j.
\end{equation} 
Now, if NB$^2$ is run with a doubly stochastic matrix, we will get the same expression, that is, the network error term has the same impact for traditional SL and NB$^2$. 
Instead, when the combination matrix is not doubly stochastic, the comparison is meaningless because we know that traditional SL would not achieve the optimal error exponent.

\subsection{Agents' Prior Knowledge and Initialization}
\label{sec:priorinitnewsec}
Let us move on to examine the second term appearing in \eqref{eq:th1main}, namely, the hyperbolic cosine term. This term is $\geq 1$ and is $1$ only when the argument is $0$. 
We need to consider the interplay between the agents' prior knowledge and the initialization of their beliefs. Now, to choose their initial beliefs $\mu_{k,0}(\theta)$, the agents rely on their \emph{individual} prior knowledge. 
When they start with different prior convictions or, more generally, when their knowledge is insufficient to determine the true prior $\pi(\theta)$, the quantities $\xi$ and $\xi_{\mathrm{net}}/\alpha$ from \eqref{eq:xixinet} will be different. As a result, the hyperbolic cosine in \eqref{eq:th1main} will be strictly greater than $1$, which  corresponds to a loss in terms of decentralized error probability.
Thus, our analysis leads to the following conclusion: \emph{Contrary to a widespread conviction, the agents' prior knowledge and the choice of the initial beliefs have a critical effect on the error probability performance, which does not disappear even as $t\rightarrow\infty$.} 
This marks a fundamental difference with respect to other decentralized learning problems (such as regression or estimation), where the role of the initial state is asymptotically immaterial.

Consider next the situation where all agents are able to retrieve the \emph{true prior $\pi(\theta)$} and accordingly choose their initial beliefs as
\begin{equation}
\mu_{k,0}(\theta)=\pi(\theta)\;\;\;\text{$\forall k$ and $\theta$.}
\label{eq:initbelequaltotrueprior}
\end{equation}
In view of \eqref{eq:xixinet}, this choice implies that $\xi_{\mathrm{net}}=\xi$.
Examining the hyperbolic cosine term in \eqref{eq:th1main}, and recalling that $\alpha=1$ for the NB$^2$ strategy, we reach the following conclusion: \emph{when the agents' prior knowledge is exact and they set their initial beliefs equal to the true prior, the error probability for the NB$^2$ strategy is only affected by the network gap $\exp(\varepsilon_k)$.} 

Let us switch to the analysis of traditional SL (with doubly stochastic matrices), for which we have $\alpha=1/K$. Under the assignment \eqref{eq:initbelequaltotrueprior}, if the true prior is uniform, we have $\xi=\xi_{\mathrm{net}}=0$, and the hyperbolic cosine term disappears.
If the prior is not uniform, the hyperbolic cosine term becomes
\begin{equation}
\cosh\left(\frac{(K-1)\,\xi}{2}\right)
\label{eq:specialHyperCosTrad}
\end{equation}
and is strictly greater than $1$. 
We conclude that, \emph{
when the agents' prior knowledge is exact and they set their initial beliefs equal to the true prior, the error probability for traditional SL is affected by both the network gap $\exp(\varepsilon_k)$ and the initialization gap \eqref{eq:specialHyperCosTrad} (unless the true prior is uniform).
}

We now provide an interpretation for this behavior. First of all, observe from \eqref{eq:th1main} that the roles of the network connectivity and the initial beliefs are made independent by the product form, namely, the network effect is embodied into the term $\exp(\varepsilon_k)$, while the initialization effect appear in the hyperbolic cosine. 
Thus, to provide a qualitative interpretation for the role of the initialization alone, it is instructive to consider a fully connected network, where the network error can be easily made zero by the assignment $a_{jk}=1/K$ for all $j$ and $k$. 
Note that this uniform assignment also implies that $[A^t]_{jk}=1/K$ for all $t$. By developing the recursion in \eqref{eq:intSLStep0}-\eqref{eq:socialLearningStep0}, we obtain
\begin{equation}
\log\frac
{\bm{\mu}^{\mathrm{SL}}_{k,t}(\theta_1)}
{\bm{\mu}^{\mathrm{SL}}_{k,t}(\theta_2)}=
\frac 1 K \sum_{\tau=1}^t\sum_{j=1}^K 
\log\frac{\ell_j(\bm{x}_{j,t-\tau+1}|\theta_1)}{\ell_j(\bm{x}_{j,t-\tau+1}|\theta_2)}
+
\log\frac{\pi(\theta_1)}{\pi(\theta_2)}.
\label{eq:fullconntrad}
\end{equation}
Likewise, unfolding the recursion in \eqref{eq:NB2intSLStep0}-\eqref{eq:NB2socialLearningStep0} with the choice $\gamma_j=K$, we obtain
\begin{equation}
\log\frac
{\bm{\mu}^{\text{NB$^2$}}_{k,t}(\theta_1)}
{\bm{\mu}^{\text{NB$^2$}}_{k,t}(\theta_2)}=
\sum_{\tau=1}^t\sum_{j=1}^K 
\log\frac{\ell_j(\bm{x}_{j,t-\tau+1}|\theta_1)}{\ell_j(\bm{x}_{j,t-\tau+1}|\theta_2)}
+
\log\frac{\pi(\theta_1)}{\pi(\theta_2)}.
\label{eq:fullconnB2}
\end{equation}
Finally, for the optimal centralized system in \eqref{eq:truepost} we can write
\begin{equation}
\log\frac{\bm{\mu}^\star_{t}(\theta_1)}{\bm{\mu}^\star_{t}(\theta_2)}=
\sum_{\tau=1}^t\sum_{j=1}^K 
\log\frac{\ell_j(\bm{x}_{j,t-\tau+1}|\theta_1)}{\ell_j(\bm{x}_{j,t-\tau+1}|\theta_2)}
+
\log\frac{\pi(\theta_1)}{\pi(\theta_2)}.
\label{eq:optcen}
\end{equation}
We see that, while \eqref{eq:fullconnB2} and \eqref{eq:optcen} coincide, the traditional SL recursion from \eqref{eq:fullconntrad} is different, because the term depending on the log likelihoods is divided by $K$. This enhances the role of the prior information and results in the performance worsening discussed so far. 

At the same time, the comparison between \eqref{eq:fullconntrad} and \eqref{eq:optcen} suggests that the error probabilities for traditional SL and the optimal Bayesian strategy would be equivalent if the term $\log(\pi(\theta_1)/\pi(\theta_2))$ in \eqref{eq:fullconntrad} was also divided by $K$. This scaling can be achieved by the following initial belief assignment:
\begin{equation}
\mu^{\mathrm{SL}}_{k,0}(\theta)\propto\big(\pi(\theta)\big)^{1/K}\;\;\;\text{$\forall k$ and $\theta$},
\label{eq:initbelequaltoscaledtrueprior}
\end{equation}
which, by developing the recursion in \eqref{eq:intSLStep0}-\eqref{eq:socialLearningStep0}, yields
\begin{equation}
K\log
\frac
{\bm{\mu}^{\mathrm{SL}}_{k,t}(\theta_1)}
{\bm{\mu}^{\mathrm{SL}}_{k,t}(\theta_2)}
=
\sum_{\tau=1}^t\sum_{j=1}^K 
\log\frac{\ell_j(\bm{x}_{j,t-\tau+1}|\theta_1)}{\ell_j(\bm{x}_{j,t-\tau+1}|\theta_2)}
+
\log\frac{\pi(\theta_1)}{\pi(\theta_2)}.
\label{eq:fullconntradnewinitnewnew}
\end{equation}
Since the log belief ratios in \eqref{eq:optcen} and \eqref{eq:fullconntradnewinitnewnew} differ by a constant, they lead to the same error probabilities. 
In other words, the effect of the initialization is canceled by using a \emph{wrong prior!} This result, which might look suspicious, can be explained as follows. It is readily verified that the beliefs corresponding to \eqref{eq:fullconntradnewinitnewnew} and \eqref{eq:optcen} are related by the following identity:
\begin{equation}
\bm{\mu}^{\mathrm{SL}}_{k,t}(\theta)\propto\big(
\bm{\mu}^{\star}_t(\theta)\big)^{1/K}
\;\;\;\text{$\forall k$ and $\theta$}.
\label{eq:intermsofposter}
\end{equation}
While this identity implies the equivalence of the error probabilities obtained with the belief vectors $\bm{\mu}^{\mathrm{SL}}_{k,t}$ and $\bm{\mu}^{\star}_t$, it also shows that traditional SL (even over a fully connected network) does not track faithfully the optimal Bayesian posterior.

Before concluding this section, we observe from \eqref{eq:xixinet} that the assignment in \eqref{eq:initbelequaltoscaledtrueprior} for traditional SL yields $\xi=\xi_{\mathrm{net}}\,K=\xi_{\mathrm{net}}/\alpha$. Substituting this equality into \eqref{eq:th1main}, we see that the hyperbolic cosine term is annihilated. We remark that this property does not require in any manner the condition of full connectivity that we have used before to provide a qualitative argument. Therefore, we have reached another interesting conclusion: \emph{When the agents' prior knowledge is exact, the error probability for traditional SL is only affected by the network gap $\exp(\varepsilon_k)$ provided that the agents initialize their beliefs with the modified version of the true prior distribution described by \eqref{eq:initbelequaltoscaledtrueprior}.} 

In summary, our analysis revealed that the initial belief assignment affects the agents' error probabilities. Therefore, to maximize their performance, the agents should possess a reliable prior knowledge and set carefully their initial beliefs based on the available prior distribution.

\section{Numerical Examples}
\label{sec:illex}

In the present section, we will examine the performance of the considered social learning strategies with different network topologies, combination policies, prior distributions and initial belief assignments. Regarding the topologies, there is usually no particular reason for the agents to ignore their own data. This means that one assumes that each agent has a self-loop. In the following, we make this assumption even if it is not required by our theorems.
Moreover, when the network graph reflects the links over a communication network, it is often the case that it is \emph{undirected}. This is because full-duplex communication links are usually available.  For this reason, in most applications the graph is undirected and  the combination matrix $A$ is doubly stochastic, since there exist many policies (e.g., the Metropolis and the Laplacian rules) that allow to obtain a doubly stochastic matrix over an undirected graph~\cite{mattaSocialLearningOpinion2024,sayednewbooks}. 
However, there may be applications where the graph is directed. Note that the assumptions in our theorems do not require undirected graphs. Accordingly, in the following we will consider both undirected and directed graphs.

\subsection{Theorem~\ref{th:HyperCos} at Work}
\label{sec:th1Exp}

We consider a Gaussian shift-in-mean decision problem, where the (Gaussian) observations $\bm{x}_{k,t}$ are iid with unit variance and with means equal to $m_k(\theta_1)=0$ and $m_k(\theta_2)=0.1$. 
The network is composed of $10$ agents linked according to the undirected topology displayed in the inset panel of Fig.~\ref{fig:th1unifbel}. On top of this topology we apply the Metropolis policy, which is defined, for $j\neq k$, by the assignment~\cite{mattaSocialLearningOpinion2024,sayednewbooks}
\begin{equation}
a_{jk}=
\begin{cases}
\dfrac{1}{\max\{|\mathcal{N}_j|,|\mathcal{N}_k|\}}\;&\textnormal{if $j\in\mathcal{N}_k$},\\
0\;&\textnormal{otherwise},
\end{cases}
\end{equation}
with the normalization condition $a_{kk}=1- \sum_{j\neq k}a_{jk}$. Since the resulting matrix $A$ is symmetric and left-stochastic, it is also doubly stochastic.
In this example, all agents are initialized with uniform beliefs and the true prior $\pi(\theta)$ is uniform as well.

\begin{figure}[t]
    \centering
        \includegraphics[width=0.5\linewidth]{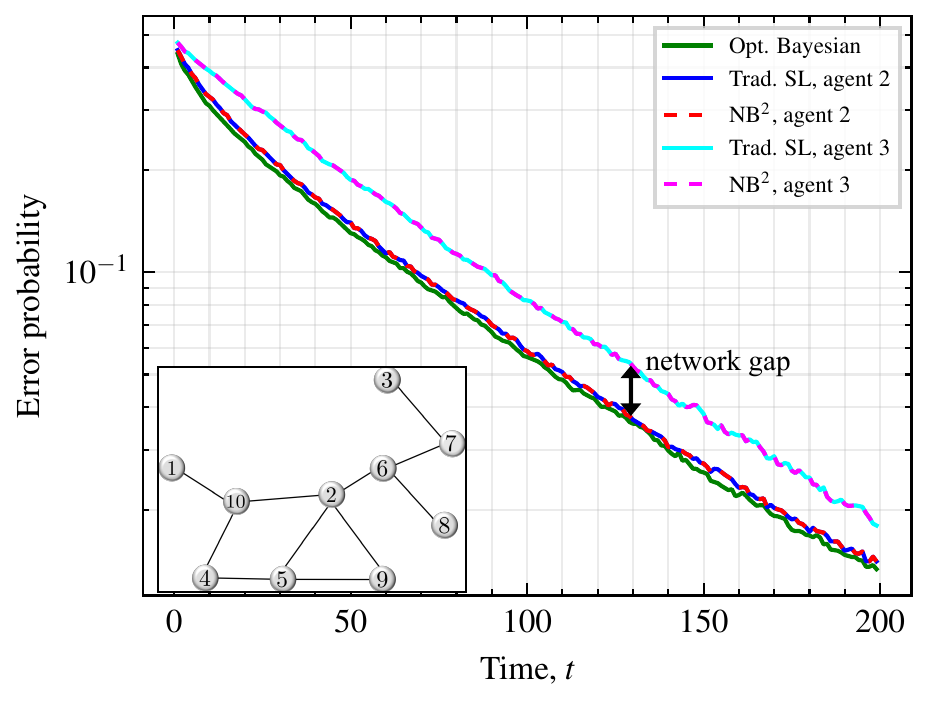}
    \caption{ 
    \emph{Network error.} Error probabilities of the considered SL strategies for the decision-making problem in Sec.~\ref{sec:th1Exp}. In this experiment, the prior and the initial beliefs are uniform. For the decentralized implementations, the agents are connected according to the network topology shown in the inset panel (here the edges are depicted with lines and no arrows, since they are \emph{undirected}). The error probabilities of agents $2$ and $3$ are shown, since these two agents represent a \emph{central} and a \emph{peripheral} agent. This choice highlights the effect of the network topology on the performance. 
    The arrows emphasize the network error $\exp(\varepsilon_k)$ in \eqref{eq:th1main}. 
    In agreement with Theorem~\ref{th:HyperCos}, traditional SL and NB$^2$ feature the same curves because of the uniform initialization. 
    All the curves are obtained averaging over $10^4$ Montecarlo runs.}
    \label{fig:th1unifbel}
\end{figure}

\begin{table*}[t]
\centering
\resizebox{\linewidth}{!}{
\begin{tabular}{|c|c|c|c|c|c|c|c|c|c|c|}
    \hline
     & Agent $1$ & \cellcolor{gray!15}Agent $2$ & \cellcolor{gray!15}Agent $3$ & Agent $4$ & Agent $5$ & Agent $6$ & Agent $7$ & Agent $8$ & Agent $9$ & Agent $10$\\
    \hline
    Undirected (Metropolis) & $0.300$ & \cellcolor{gray!15}$0.038$ & \cellcolor{gray!15}$0.382$ & $0.135$ & $0.097$ & $0.096$ & $0.241$ & $0.280$ & $0.130$ & $0.116$ \\
    \hline
    Directed (Metropolis) & $0.199$ & \cellcolor{gray!15}$0.117$ & \cellcolor{gray!15}$0.309$ & $0.156$ & $0.231$ & $0.072$ & $0.187$ & $0.225$ & $0.337$ & $0.094$ \\
    \hline
    Directed (Laplacian) & $0.671$ & \cellcolor{gray!15}$0.254$ & \cellcolor{gray!15}$0.934$ & $0.497$ & $0.626$ & $0.254$ & $0.538$ & $0.642$ & $0.726$ & $0.298$ \\
    \hline
\end{tabular}
}
\vspace{1ex}
\caption{Network errors \eqref{eq:overallneterr} for: $i)$ the undirected graph in Fig.~\ref{fig:th1unifbel}, equipped with the Metropolis rule; $ii)$ the directed graph in Fig.~\ref{fig:directTopoMetropolis}, equipped with the Metropolis rule; and $iii)$ the directed graph in Fig.~\ref{fig:directTopoLaplacian}, equipped with the Laplacian rule. The shaded columns highlight the agents on which we focus in Figs.~\ref{fig:th1unifbel},~\ref{fig:directTopoMetropolis}, and~\ref{fig:directTopoLaplacian}.}
\label{tab:neterrs}
\end{table*}

Figure~\ref{fig:th1unifbel} shows the error probabilities of agents $2$ and $3$ for the three strategies considered in Theorem~\ref{th:HyperCos}, i.e.,  traditional SL \eqref{eq:intSLStep0}-\eqref{eq:socialLearningStep0}, NB$^2$ \eqref{eq:NB2intSLStep0}-\eqref{eq:NB2socialLearningStep0}, and the centralized MAP rule based on the true posterior \eqref{eq:truepost}.

We observe the following trends. 
First, in the logarithmic scale (on the vertical axis) the error probability curves are nearly parallel. This matches perfectly the theoretical results on the error exponent presented at the beginning of Sec.~\ref{sec:netconnandprior}, where we explained that, with doubly stochastic matrices and observations independent over time and across the agents, both traditional SL and NB$^2$ (with weights $\gamma_k=1/v_k$) attain the optimal error exponent achieved by the MAP rule. 
We note in passing that, with uniform initial beliefs, the curves of traditional SL and NB$^2$ are identical. This is obvious because the log belief ratios of the two strategies differ by a constant term $K$, which does not affect the error probability. 

The second main observation is that, albeit parallel, the curves pertaining to the decentralized strategy do not coincide with the curve for the optimal system. This is predicted by Theorem~\ref{th:HyperCos}. In particular, the theorem states that, for the case of uniform beliefs, the asymptotic gap between the error probability $p_{k,t}$ of agent $k$ and the optimal error probability $p^\star_t$ is given by the factor $\exp(\varepsilon_k)$, which depends on the specific connectivity of agent $k$. 
Using formula \eqref{eq:overallneterr}, we have computed the network errors $\varepsilon_k$ for each agent $k$, as reported in Table~\ref{tab:neterrs}, first row. We see that the network error for agent $2$ is very small, which explains why its probability curve approaches more closely the optimal curve. 
The gap due to the network error corresponding to agent $3$ is instead non-negligible, and is emphasized by the arrows in the figure.

\begin{figure}[t]
    \centering
        \includegraphics[width=0.5\columnwidth]{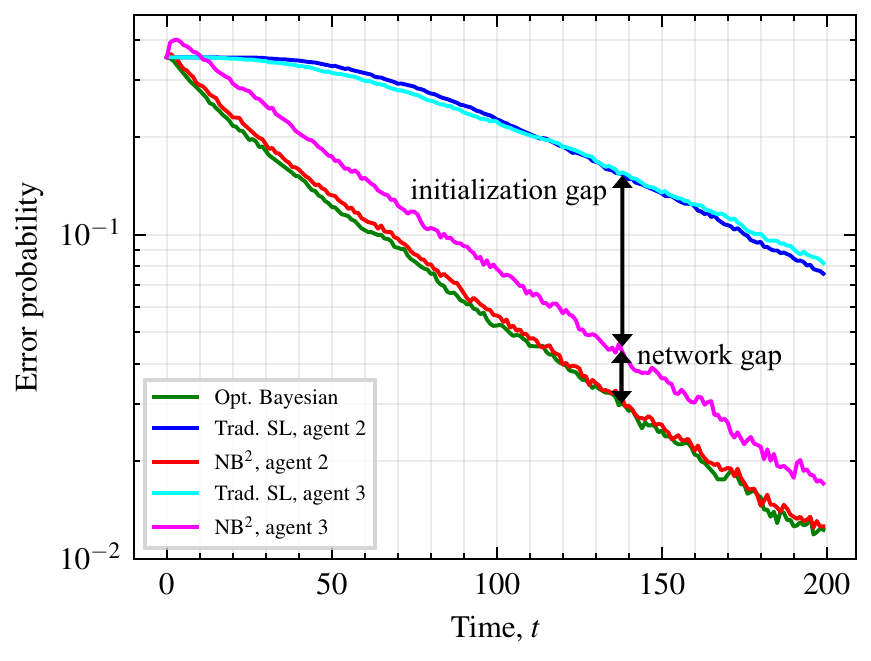}
    \caption{\emph{Belief initialization.} Same setup used in Fig.~\ref{fig:th1unifbel}, but for the prior and the initial beliefs, which are now set to $[0.65, 0.35]$.  
    Comparing against Fig.~\ref{fig:th1unifbel}, and in agreement with Theorem~\ref{th:HyperCos}, we see that now the error probability for traditional SL pays an additional error term quantified (in the logarithmic scale adopted in the figure) by $\log_{10}\cosh(\,(K-1)\,\xi/2\,)$, due to the non-uniform initialization, yielding an overall gap equal to $\varepsilon_k\,\log_{10} e+\log_{10}\cosh(\,(K-1)\,\xi/2\,)$. The arrows in the figure emphasize the two additive 
    terms of this gap.}
    \label{fig:th1nonunifbel}
\end{figure}

Let us move on to examine the additional impact of the agents' prior knowledge and initial belief assignment. 
To this end, in Fig~\ref{fig:th1nonunifbel}, we consider the same setting used in Fig.~\ref{fig:th1unifbel}, but for one variation: the true prior is no longer uniform, in particular, we have $\pi(\theta_1)=0.65$. All agents use this prior as initial belief.
We see again that the error probability curves stay nearly parallel for sufficiently large $t$, i.e., they feature the same error exponent. 
However, while the relative behavior of the NB$^2$ and the optimal strategy is the same as before, now a difference emerges for what concerns traditional SL. The error probability curves of the latter scheme are about one order of magnitude higher than the curve pertaining to NB$^2$ and the optimal Bayesian classifier. 
This is predicted by Theorem~\ref{th:HyperCos}, which reveals that, in the presence of a non-uniform prior distribution, the traditional SL strategy initialized with the correct prior pays an additional loss (with respect to the $\exp(\varepsilon_k)$ term due to imperfect connectivity). This loss is quantified by the hyperbolic cosine in \eqref{eq:specialHyperCosTrad}, yielding, in the logarithmic scale adopted in Fig.~\ref{fig:th1nonunifbel}, an overall gap equal to $\varepsilon_k\,\log_{10}e + \log_{10}\cosh(\,(K-1)\,\xi/2\,)$. 
The arrows in the figure emphasize the two components of this gap, namely, the contribution due to the network error and the contribution due to initialization.
As explained in Sec.~\ref{sec:priorinitnewsec}, to compensate for the initialization gap, it is necessary to initialize traditional SL with the wrong prior in \eqref{eq:initbelequaltoscaledtrueprior}. In this case (recall that we are using a doubly stochastic combination matrix and, hence, NB$^2$ uses constant weights $\gamma_k=K$), the error probability curves for traditional SL would be superimposed to the error probability curves for NB$^2$.

\begin{figure}[t]
    \centering
        \includegraphics[width=0.5\columnwidth]{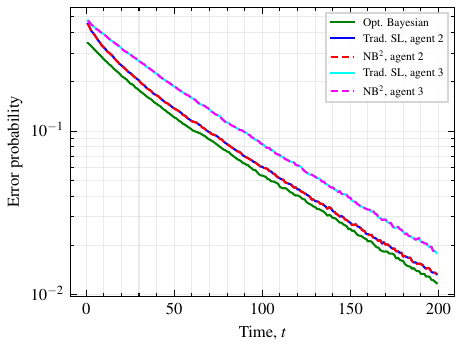}
    \caption{\emph{Lack of prior knowledge.} Same setup used in Fig.~\ref{fig:th1unifbel}, but for the true prior, which is now set to $[0.65, 0.35]$. However, the agents do not know the true prior and continue to use uniform initial beliefs as in Fig.~\ref{fig:th1unifbel}. 
    }
    \label{fig:th1nonunifbelunifbel}
\end{figure}

Finally, we want to examine the effect of the agents' prior knowledge.
To this end, in Fig~\ref{fig:th1nonunifbelunifbel}, we modify again the setting used in Fig.~\ref{fig:th1unifbel} by imposing a non-uniform prior $\pi(\theta_1)=0.65$. However, we now assume that, due to the lack of reliable prior knowledge, the agents continue to work under a uniform assignment for their initial beliefs (which, since we are considering a doubly stochastic matrix, implies that the error probabilities for traditional SL and NB$^2$ are equal). Under this scenario, we see from \eqref{eq:xixinet} that the term $\xi_{\mathrm{net}}$ (which depends on the initial beliefs) is zero, while the term $\xi$ (which depends on the true prior) is not, resulting into the gap $\cosh(\xi/2)$.
Comparing Fig.~\ref{fig:th1nonunifbelunifbel} against Fig.~\ref{fig:th1unifbel}, we see that the gap between the decentralized schemes and the optimal Bayesian strategy increases. This is particularly visible for agent $2$, whose error probability curve in Fig.~\ref{fig:th1unifbel} was in practice coincident with the optimal curve.

\subsection{Role of the Topology and the Combination Policy}
The error term \eqref{eq:overallneterr} captures the effect of the network on the error probability of each agent $k$. 
In this section we examine the behavior of this term in connection with the topology and the combination policy. 
To this aim, it is useful to consider a uniform prior distribution and uniform initial beliefs, so that the hyperbolic cosine term in \eqref{eq:th1main} is nullified and the network effect emerges more clearly.

In Fig.~\ref{fig:th1unifbel}, we considered an undirected graph and a Metropolis combination policy, yielding a doubly stochastic matrix $A$. We start by modifying the topology in the following manner: first we remove the link between agents $2$ and $5$. Then, in the loop that involves agents $2,  4, 5,  9, 10$, we turn the original undirected edges into directed  edges, in a counterclockwise direction.  The resulting graph is therefore \emph{directed} and \emph{less connected} than the one used in Fig.~\ref{fig:th1unifbel}. On top of this new graph, we continue to use the Metropolis combination policy.

In Fig.~\ref{fig:directTopoMetropolis}, we report the error probability curves pertaining to agents $2$ and $3$, namely, the same agents on which we focused in Fig.~\ref{fig:th1unifbel}. 
First of all, we see that the curves pertaining to traditional SL and NB$^2$ are no longer superimposed, which admits the following explanation. 
We know that the Metropolis rule yields by construction a left-stochastic matrix. However,  over a \emph{directed} graph there is no guarantee that the resulting combination matrix is doubly stochastic. As a matter of fact, we have verified that the matrix $A$ obtained by applying the Metropolis rule to the topology in Fig.~\ref{fig:directTopoMetropolis} is not doubly stochastic. 
We know that, under non-doubly-stochastic matrices, NB$^2$ attains the optimal error exponent, while traditional SL does not.
Therefore, we expect that traditional social learning is asymptotically worse than NB$^2$. This matches the behavior observed in Fig.~\ref{fig:directTopoMetropolis}, even if for agent $2$ the difference is minimal.

\begin{figure}[t]
    \centering
    \includegraphics[width=0.5\columnwidth]{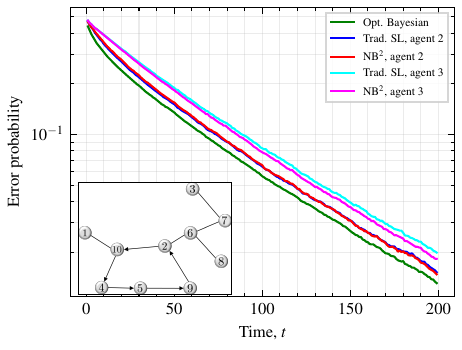}
    \caption{
    \emph{Impact of directed graphs.} Same setup used in Fig.~\ref{fig:th1unifbel}, but for the network topology, which is described by the directed graph shown in the inset panel.
    }
    \label{fig:directTopoMetropolis}
\end{figure}

We focus next on the NB$^2$ strategy, which, as just said, is optimal at the first exponential order.
Thus, we need to examine the higher-order terms affecting the error probability. Since the true prior and the initial beliefs are uniform, the higher-order terms are only determined by the network errors $\varepsilon_k$ in \eqref{eq:overallneterr}, which are reported in Table~\ref{tab:neterrs}, second row. 
In particular, comparing the errors for agent $2$ in the new topology and in the previous topology (first row in the table), we see that the network error for the new graph is increased. This is also confirmed by the observation that the error probability of agent $2$ in Fig.~\ref{fig:directTopoMetropolis} is  worse than the error probability of the optimal system, while in Fig.~\ref{fig:th1unifbel} it was nearly optimal. 
This worsening looks reasonable, in light of the fact that the new graph is sparser. However, we will now explain that the sparsity in the graph is not the only factor that influences the network error. To this end, we switch to the analysis of agent $3$. From Table~\ref{tab:neterrs}, we see that now the network error associated with the undirected graph is lower, which means that the sparser directed graph does not worsen the performance of agent $3$. This behavior calls for a closer look at
the structure of the error in \eqref{eq:overallneterr}. This error is surely influenced by the topology (since the support graph of the combination matrix $A$ is dictated by the topology), but also by: $i)$ the specific values of the combination matrix entries (determined by the combination policy); $ii)$ the Perron vector entries $v_j$; and $iii)$ the individual agents' KL divergences $\Delta_j$, which in this particular case do not matter since they are set equal for all $j$. The considered example shows that the interplay among these factors is complex, and the mere connectivity of the graph is not enough to draw conclusions as regards the performance of each individual agent. Notably, the closed-form expression for the network error \eqref{eq:overallneterr} is able to capture the overall effect of these factors on the error probabilities. In particular, our examples highlighted that the network errors are in general variable across the agents, and that a change in the topology may have contrasting effects on different agents. This variability is embodied in formula \eqref{eq:overallneterr}. We remark that a traditional approach in distributed optimization and decision-making is to encode the role of the topology in the second largest magnitude eigenvalue as we did, for example, in Corollary~\ref{cor:1}. However, this approach is insufficient to capture the differences emerging across the agents, and the more detailed representation provided by formula \eqref{eq:overallneterr} is necessary.

\begin{figure}[t]
    \centering
    \includegraphics[width=0.5\columnwidth]{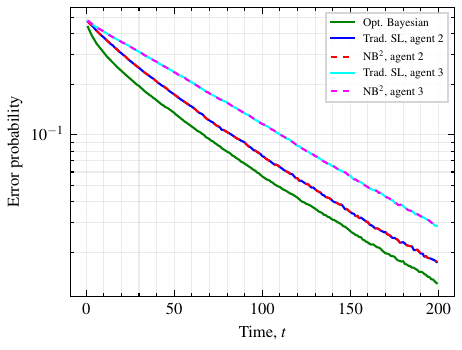}
    \caption{\emph{Role of the combination policy.} 
    Same setup used in Fig.~\ref{fig:directTopoMetropolis}, but for the combination policy, which is now the Laplacian rule in \eqref{eq:Laplace} with $\phi=0.5$.
    }
    \label{fig:directTopoLaplacian}
\end{figure}

Let us now focus on the effect of the combination policy. Over the considered directed topology, we replace the Metropolis with the Laplacian rule, which is defined, for $j\neq k$, by the assignment~\cite{mattaSocialLearningOpinion2024,sayednewbooks}
\begin{equation}
a_{jk}=
\begin{cases}
\dfrac{\phi}{\max\limits_{i=1,2,\ldots,K}|\mathcal{N}_i|}\;&\textnormal{if $j\in\mathcal{N}_k$},\\
0\;&\textnormal{otherwise},
\end{cases}
\label{eq:Laplace}
\end{equation}
with $a_{kk}=1- \sum_{j\neq k}a_{jk}$. 
In particular, in the considered experiments, we set $\phi=0.5$. We have verified that, for the topology in Fig.~\ref{fig:directTopoMetropolis}, the combination matrix obtained with the Laplacian policy is doubly stochastic. Figure~\ref{fig:directTopoLaplacian} reports the error probability curves pertaining to this new policy. Traditional SL and NB$^2$ exhibit the same performance, as it must be since $A$ is doubly stochastic. 
Comparing Fig.~\ref{fig:directTopoLaplacian} against Fig.~\ref{fig:directTopoMetropolis}, we see that the performance of both agents $2$ and $3$ is worsened. 
This is confirmed by the evaluation of the network errors, see the third row in Table~\ref{tab:neterrs}.

We remark that the conducted experiments are meant to provide hints about the impact that the particular topology or combination policy may have on the performance, and how the obtained formula is able to capture this impact. 
On the other hand, the experiments cannot provide any conclusion regarding the optimization of the combination policy or the topology to maximize the performance. For example, by optimizing the choice of the free parameter $\phi$ in the Laplacian rule, we can improve the performance obtained in the particular experiment shown in Fig.~\ref{fig:directTopoLaplacian}.

\subsection{Comparing Strategies}
\label{sec:compStratEx}

In this section we compare the three strategies considered so far (namely, traditional SL, NB$^2$, and the optimal centralized MAP) against the following two alternative strategies proposed in the literature: the arithmetic averaging strategy, which in the self-learning  step implements the classic Bayesian update \eqref{eq:intSLStep0}, while in the pooling step \eqref{eq:socialLearningStep0} replaces the weighted geometric average of the intermediate beliefs with their arithmetic average \cite{zhaoLearningSocialNetworks2012, mattaSocialLearningOpinion2024}; and the recent approach proposed in \cite{mitraNewApproachDistributed2021}, which uses a more conservative communication rule where each agent $k$ shares the minimum among the belief updated locally by itself and the beliefs of its neighbors. 
For ease of presentation, in the following we will refer to the latter strategy as the min-belief strategy.

To go beyond the binary setting considered by Theorem~\ref{th:HyperCos}, we
compare the aforementioned five strategies under a ternary classification problem where $\bm{x}_{k,t}$ is a $5$-dimensional vector whose entries are iid unit-variance Gaussian variables with means $0$, $0.05$, and $0.1$ under hypotheses $\theta_1$, $\theta_2$, and $\theta_3$. 
The observations are iid over time and across the agents. 

\begin{figure*}[t]
    \centering
        \includegraphics[width = 0.25\linewidth]{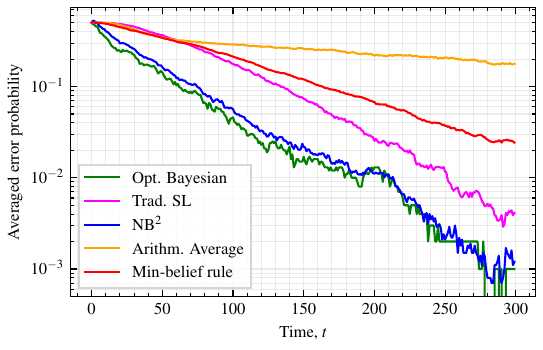}
    \hfill
        \includegraphics[width = 0.25\linewidth]{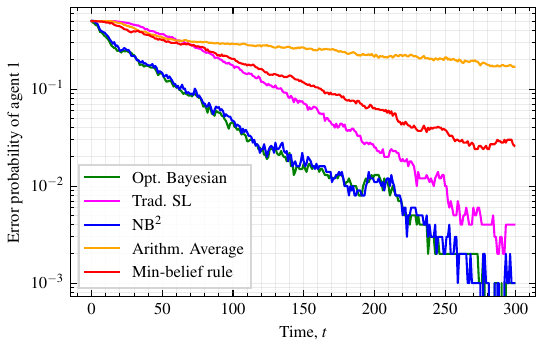}
    \hfill
        \includegraphics[width = 0.25\linewidth]{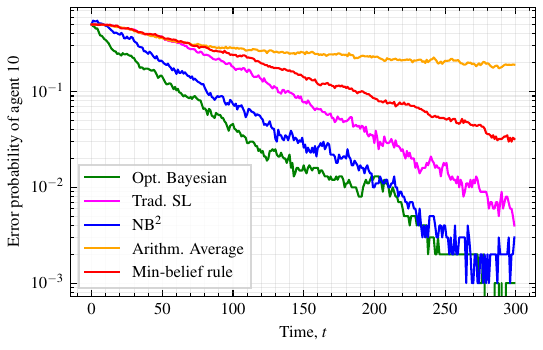}
    \hfill
    \begin{minipage}[h]{0.1\linewidth}
        \vspace{-2.75cm}
        \centering
        \includegraphics[width = \columnwidth]{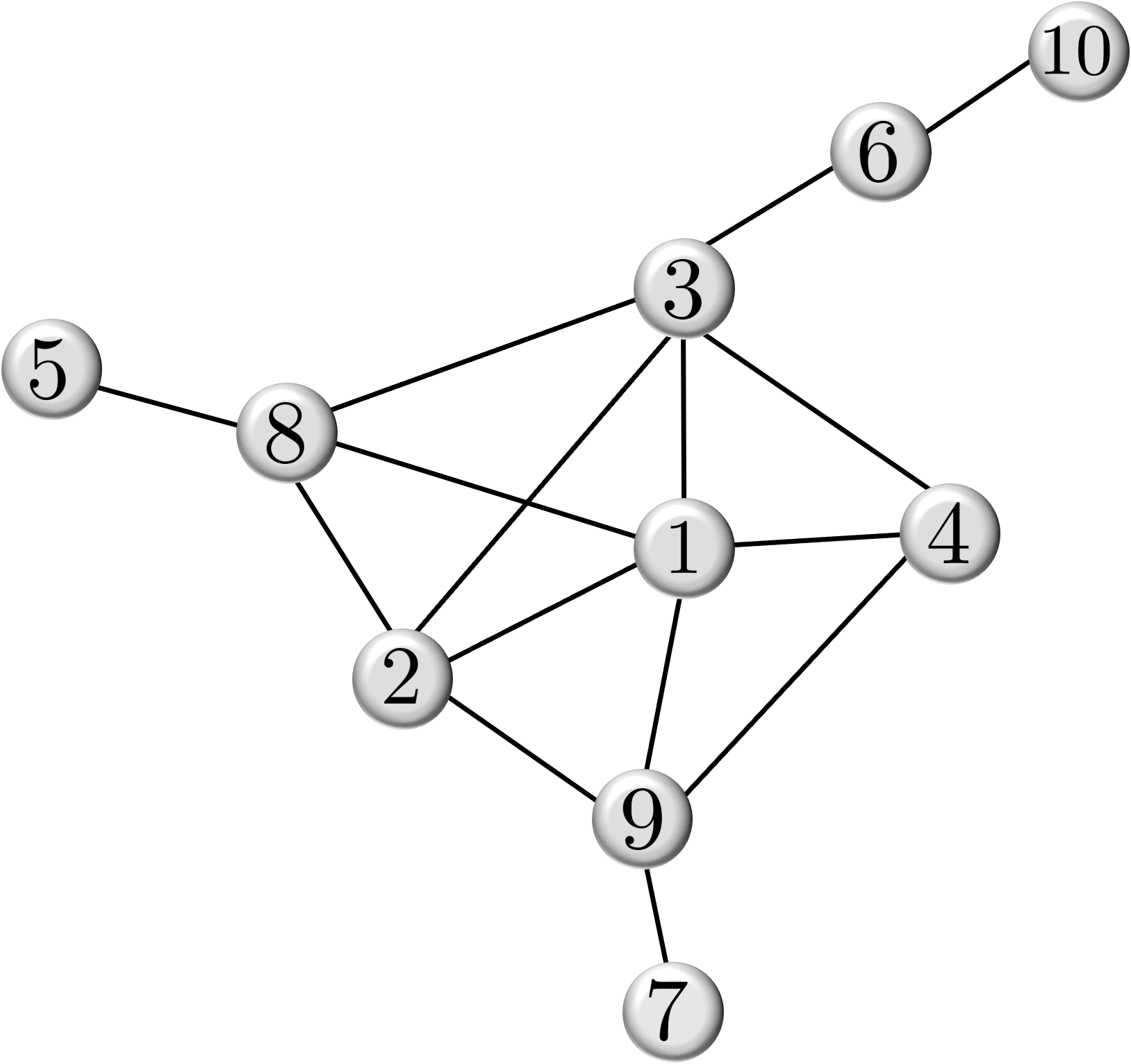}
    \end{minipage}
    \caption{\emph{Comparing strategies.} Error probabilities attained by different SL strategies, for the decision-making problem detailed in Sec.~\ref{sec:compStratEx}. For the decentralized implementations, the agents are connected according to the network topology 
    shown in the rightmost panel. The leftmost panel shows the error probability averaged across the agents, while the second and third panels show the error probabilities of agents $1$ and $10$ (one central and one peripheral agent, respectively). 
    All the curves are obtained averaging over $10^3$ Montecarlo runs.
    }
\label{fig:nonunifgentopolog}
\end{figure*}

\begin{figure*}[t]
    \centering
        \includegraphics[width = 0.25\linewidth]{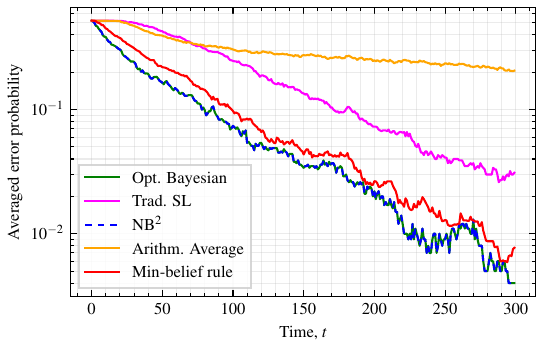}
    \hfill
        \includegraphics[width = 0.25\linewidth]{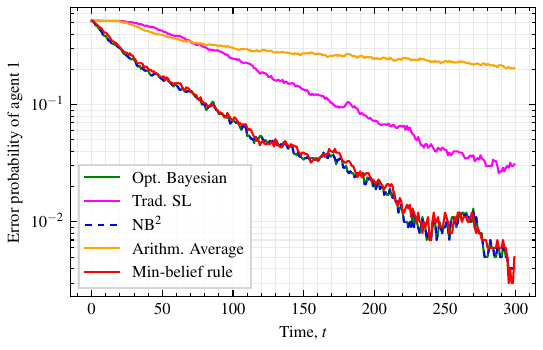}
    \hfill
        \includegraphics[width = 0.25\linewidth]{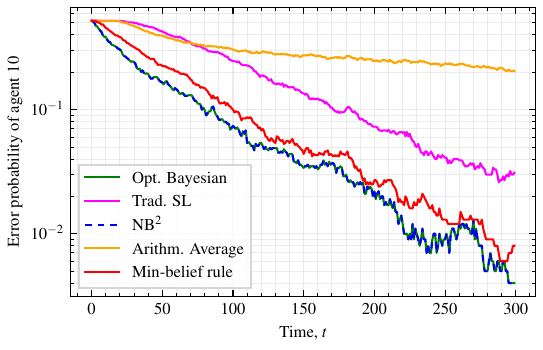}
    \hfill
    \begin{minipage}[h]{0.1\linewidth}
        \vspace{-2.75cm}
        \centering
        \includegraphics[width = \columnwidth]{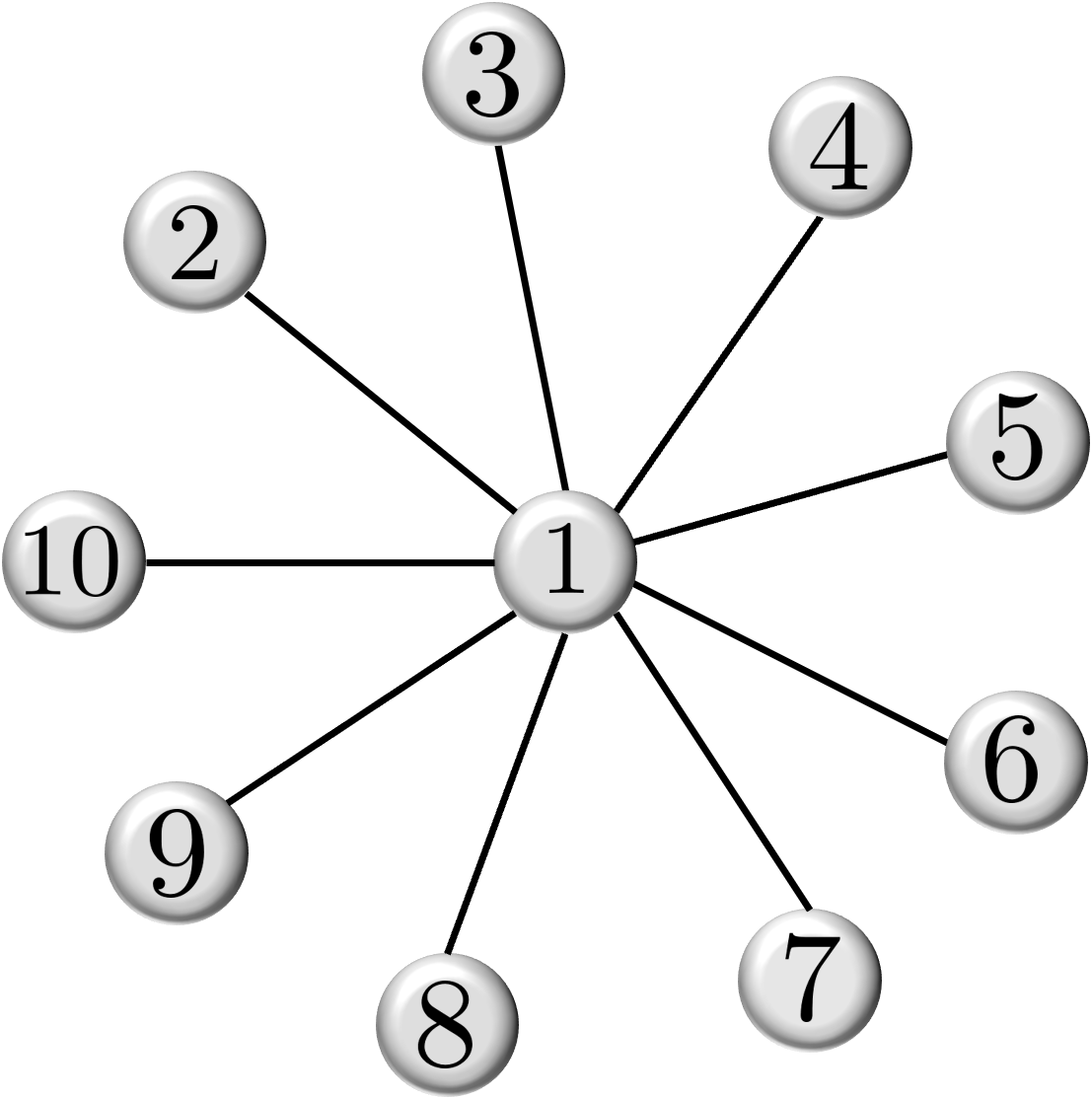}
    \end{minipage}
    \caption{
    \emph{Comparing strategies.}
    Same setup used in Fig.~\ref{fig:nonunifgentopolog}, but for the network topology, which is now the star topology shown in the rightmost panel. 
    }
\label{fig:topostar}
\end{figure*}

In Fig.~\ref{fig:nonunifgentopolog}, we show the error probability curves for the described setup. Specifically, the leftmost panel shows the error probability curves \emph{averaged across the agents}; the second and third panels show the curves pertaining to agents $1$ and $10$, respectively; the rightmost panel depicts the network topology, on top of which we designed a Metropolis matrix. The prior is given by $\pi=[0.2, 0.3, 0.5]$, and the initial beliefs are all set equal to the prior.

The behavior of traditional SL, NB$^2$, and the optimal centralized strategy are similar to the behavior observed in the previous examples. We also recall that if traditional SL was initialized with the modified prior in \eqref{eq:initbelequaltoscaledtrueprior}, its error probability would coincide with that of NB$^2$.

Let us examine how the two other strategies perform.
Arithmetic averaging performs significantly worse. In particular, its error probabilities exhibit a much slower convergence rate, that is, the error exponent is significantly higher than the common exponent characterizing traditional SL, NB$^2$, and the optimal centralized MAP. 
Also for the min-belief strategy, the error exponent is worse than the optimal exponent and, essentially for all time instants, the error probability curves are higher than the error probability curves of traditional SL, NB$^2$, and the centralized scheme.  

In order to gain further insight into the behavior of the min-belief rule, in Fig.~\ref{fig:topostar} we repeat the experiment over a star topology and an identifiability setup adapted to this topology, as considered in~\cite{mitraNewApproachDistributed2021}. 
The center of the star topology corresponds to an agent that has an identifiable decision model, whereas all the remaining agents are completely uninformative. Thus, the optimal centralized decision rule coincides with the MAP rule implemented locally by the informative agent alone. On the other hand, all the uninformative agents can easily mirror the likelihood ratios of the informative agent, since they are connected to it over the star topology. By construction, traditional SL and NB$^2$ are able to mimic this protocol. In particular, over a star topology the effect of the network error can be annihilated by using the Metropolis combination policy.\footnote{This is verified by noting that $i)$ in the network error \eqref{eq:overallneterr} only the term relative to the informative agent $1$ is nonzero; and $ii)$ for the Metropolis policy applied to the star topology, $[A^\tau]_{1k}=v_1=1/K$ for all $\tau$ and $k$, implying $\varepsilon_k=0$ for all $k$.}
Once a null network error is guaranteed, from Theorem~\ref{th:HyperCos} we conclude that the NB$^2$ decision rule becomes equal to the MAP rule and that the same equivalence would hold for traditional SL under the prior assignment in  \eqref{eq:initbelequaltoscaledtrueprior}.

For the min-belief rule, while the second panel shows that the informative agent attains the optimal performance, the third panel reveals that this is no longer the case for the uninformative agents. This means that the min-belief rule introduces an additional error also in this simple ad-hoc scenario. In contrast, in~\cite{mitraNewApproachDistributed2021} the superiority of the min-belief rule was claimed. However, this claim was based on the rejection rate, which as thoroughly discussed in Sec.~\ref{sec:RR}, is not a valid performance measure. 

In summary, the analysis conducted in this section highlights that it is fundamental to choose an appropriate metric to compare the decision-making strategies. The error probability is one such metric. 
The best decentralized strategy is NB$^2$, which allows to manage non-doubly-stochastic combination matrices;
under doubly stochastic matrices, traditional SL exhibits the same error probability as NB$^2$ under the prior assignment in  \eqref{eq:initbelequaltoscaledtrueprior}.
Other strategies, such as arithmetic averaging or min-belief, exhibit a non-negligible performance gap. 
To avoid misunderstanding, we remark that the fact that by using the rejection rate or the error probability one gets different results \emph{cannot} be justified by saying that one is using two different criteria. More simply, the conclusions obtained when using the rejection rate are not reliable since we have proven that the rejection rate cannot be considered a meaningful performance criterion.

\section{Conclusion}
We addressed the problem of characterizing the performance of decentralized decision-making, a.k.a. social learning. 
First, we showed that the rejection rate is not an appropriate performance metric.
We turned then to the classic decision performance index, the error probability. 
From the asymptotic analysis of this probability in terms of error exponent, we know that traditional SL and a recent strategy called NB$^2$ attain the optimal centralized exponent for the case of independent agents.
However, we showed that this asymptotic equivalence hides important effects due to decentralization. 
The take-away is that, differently from other decentralized learning problems (such as estimation/regression), in decision-making the individual agents' connectivity and initialization do matter and are not washed out in the long run.   
The comparison of different strategies in terms of error probability revealed interesting behavior and confirmed the predictions of our theoretical analysis, according to which NB$^2$ is the best strategy.
Several useful extensions are possible. 
For example, Theorem~\ref{th:HyperCos} revealed the presence of a network error that depends on the powers of the combination matrix entries and also, for NB$^2$, on the Perron vector. It would be interesting to examine whether the network error can be minimized by optimizing the choice of the network topology and/or the combination policy.
Another interesting topic would be the optimization of the SL strategies to account for the dependence across the agents. 
Finally, a challenging open problem is the assessment of a closed-form relation, holding beyond the Gaussian problem covered by Theorem~\ref{th:HyperCos}, for the agents' error probabilities. 
We are currently pursuing this research line by exploiting the framework of \emph{exact asymptotics}~\cite{demboLargeDeviationsTechniques2009, MattaTSIPNexactasy}.

\appendices
\section{Auxiliary Results}
\begin{lemma}[\textbf{Convergent Network Errors}]
\label{lem:neterr}
Under Assumption~\ref{ass:primitive_combination}, the following series are convergent:
\begin{align}
\varepsilon_k'&\triangleq \sum_{\tau=1}^\infty
\sum_{j=1}^K 
\left(
[A^{\tau}]_{jk} - v_j
\right)\,\gamma_j\,\Delta_j,
\label{eq:epsprime}
\\
\varepsilon_{k}''&\triangleq
4 \sum_{\tau=1}^\infty\sum_{j=1}^K \left(
[A^{\tau}]_{jk} - v_j
\right) \,v_j\,\gamma_j^2\,\Delta_j
+
2 \sum_{\tau=1}^\infty\sum_{j=1}^K \left(
[A^{\tau}]_{jk} - v_j
\right)^2\,\gamma_j^2\, \Delta_j.
\label{eq:epsprimeprime}
\end{align}
\end{lemma}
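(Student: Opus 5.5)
The plan is to show that both series converge absolutely. The tool is the geometric bound of Lemma~\ref{lem:classicmatlem}. Under Assumption~\ref{ass:primitive_combination}, $A$ is left-stochastic and primitive, so we may fix any $\lambda$ with $|\lambda_2(A)|<\lambda<1$, where $\lambda_2(A)$ is the second largest-magnitude eigenvalue. Then there is a constant $c_\lambda$, depending only on $\lambda$ for the fixed $A$, such that
\begin{equation*}
\big|[A^{\tau}]_{jk}-v_j\big|\le c_\lambda\lambda^{\tau}
\end{equation*}
for all $j,k$ and all $\tau\ge 1$. The remaining quantities $\gamma_j$, $v_j$ and $\Delta_j$ are finite constants that do not depend on $\tau$. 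Finiteness of $\Delta_j$ comes from Assumption~\ref{ass:globalIdent}; for the Gaussian model it is the explicit expression \eqref{eq:KLdeltadef} with $s_k^2>0$. Finiteness of $\gamma_j$ holds either because $\gamma_j=1/v_j$ with $v_j>0$ from \eqref{eq:perronDef}, or because $\gamma_j=1$ for traditional SL.

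First, for $\varepsilon_k'$ in \eqref{eq:epsprime} I bound the absolute value of the $\tau$-th summand by
\begin{equation*}
c_\lambda\lambda^{\tau}\sum_{j=1}^K\gamma_j\Delta_j .
\end{equation*}
The inner sum over $j$ is finite, so comparison with the geometric series $\sum_{\tau\ge1}\lambda^\tau=\lambda/(1-\lambda)<\infty$ gives absolute convergence. This also yields the explicit bound
\begin{equation*}
|\varepsilon_k'|\le c_\lambda\frac{\lambda}{1-\lambda}\sum_j\gamma_j\Delta_j .
\end{equation*}

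Second, for $\varepsilon_k''$ in \eqref{eq:epsprimeprime} I treat the two series separately. The linear term is handled exactly as $\varepsilon_k'$, with weights $v_j\gamma_j^2\Delta_j$ in place of $\gamma_j\Delta_j$. The quadratic term has nonnegative summands bounded by $c_\lambda^2\lambda^{2\tau}\gamma_j^2\Delta_j$, so it converges by comparison with $\sum_\tau\lambda^{2\tau}=\lambda^2/(1-\lambda^2)$. This is the same computation that gives Corollary~\ref{cor:1}, as one sees by taking $\gamma_j=1/v_j$. Since each of the two series converges absolutely, so does their sum, and $\varepsilon_k''$ is well defined.

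There is no real obstacle. The only point needing care is that Lemma~\ref{lem:classicmatlem} requires primitivity and not just irreducibility. Irreducibility alone would allow other unit-modulus eigenvalues, and then $[A^\tau]_{jk}$ could oscillate without converging. This is exactly where Assumption~\ref{ass:primitive_combination} is used. Absolute convergence also makes the order of the double summation over $\tau$ and $j$ irrelevant, which will matter later when these constants are manipulated inside the proof of Theorem~\ref{th:HyperCos}.
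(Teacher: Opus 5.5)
Your proposal is correct and follows the paper's proof essentially step for step. You bound $|[A^{\tau}]_{jk}-v_j|$ by $c_\lambda\lambda^{\tau}$ via Lemma~\ref{lem:classicmatlem}, and then compare with the geometric series $\sum_\tau\lambda^\tau$ for $\varepsilon_k'$ and for the linear part of $\varepsilon_k''$, and with $\sum_\tau\lambda^{2\tau}$ for the quadratic part. The resulting explicit bounds are exactly those in \eqref{eq:1bound} and \eqref{eq:2bound}.
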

\begin{IEEEproof}
Exploiting \eqref{eq:matbound}, the triangle inequality, and the  geometric series, we can write
\begin{equation}
|\varepsilon_k'|
\leq c_\lambda  \sum_{\tau=1}^\infty \lambda^\tau 
\sum_{j=1}^K \gamma_j\,\Delta_j=
\frac{c_\lambda\,\lambda}{1-\lambda}\sum_{j=1}^K \gamma_j\,\Delta_j,
\label{eq:1bound}
\end{equation}
which guarantees the convergence of the series \eqref{eq:epsprime}. 

Similarly, by exploiting \eqref{eq:matbound} we have
\begin{equation}
|\varepsilon_k''|
\leq
\frac{4\,c_\lambda\,\lambda}{1-\lambda}\sum_{j=1}^K v_j\,\gamma_j^2\,\Delta_j+
2\,c_\lambda^2 \sum_{\tau=1}^\infty \lambda^{2\tau} \sum_{j=1}^K \gamma_j^2\,\Delta_j,  
\label{eq:2bound}
\end{equation}
which guarantees the convergence of the series \eqref{eq:epsprimeprime}. 
\end{IEEEproof}

\section{Proof of Theorem~\ref{th:HyperCos}}
\label{app:1}   

Developing the recursion in \eqref{eq:NB2intSLStep0}-\eqref{eq:NB2socialLearningStep0}, we can write
\begin{align}
\log\frac{\bm{\mu}_{k,t}(\theta_1)}{\bm{\mu}_{k,t}(\theta_2)}&=
\sum_{\tau=1}^t\sum_{j=1}^K [A^\tau]_{jk}\,
\gamma_j\,\log\frac{\ell_j(\bm{x}_{j,t-\tau+1}|\theta_1)}{\ell_j(\bm{x}_{j,t-\tau+1}|\theta_2)}
+
\underbrace{\sum_{j=1}^K [A^t]_{jk} 
\log\frac{\mu_{j,0}(\theta_1)}{\mu_{j,0}(\theta_2)}}_{\triangleq \xi_{k,t}}.
\label{eq:fcnet}
\end{align}
Let us consider first the case that the true hypothesis is $\theta_1$. 
It is readily verified that, in this case, for all $t$ we have
\begin{equation}
\log\frac{\ell_k(\bm{x}_{k,t}|\theta_1)}{\ell_k(\bm{x}_{k,t}|\theta_2)}\sim \mathcal{G}\left(
\Delta_k, 2\Delta_k
\right),   
\end{equation}
where $\mathcal{G}(a,b)$ denotes a Gaussian random variable with mean $a$ and variance $b$, and where $\Delta_k$ is the KL divergence defined by \eqref{eq:KLdeltadef}.
Then, exploiting \eqref{eq:fcnet}, the independence across space and time, and the determinism of the initial beliefs, we get
\begin{equation}
\log\frac{\bm{\mu}_{k,t}(\theta_1)}{\bm{\mu}_{k,t}(\theta_2)}
\sim\mathcal{G}\big(\nu_{k,t}+\xi_{k,t},\, \sigma^2_{k,t}\big),
\end{equation}
where we defined
\begin{align}
\nu_{k,t}&\triangleq
\sum_{\tau=1}^t\sum_{j=1}^K [A^\tau]_{jk}\,\gamma_j\,\Delta_j,
\label{eq:1momdef}\\
\sigma^2_{k,t}&\triangleq
2 \sum_{\tau=1}^t\sum_{j=1}^K ([A^\tau]_{jk})^2\, \gamma^2_j\,\Delta_j.
\label{eq:2momdef}
\end{align}
Recalling that we are assuming that $\theta_1$ is true, the conditional error probability in \eqref{eq:conderr} can be written as
\begin{equation}
p_{k,t}(\theta_1)=\mathbb{P}_{\theta_1}\left[
\log\frac{\bm{\mu}_{k,t}(\theta_1)}{\bm{\mu}_{k,t}(\theta_2)}\leq 0
\right]=
Q\left(
\frac{\nu_{k,t}+\xi_{k,t}}{\sigma_{k,t}}
\right).
\label{eq:pkt1}
\end{equation}
Let us switch to the case where $\theta_2$ is true. To compute the corresponding conditional error probability $p_{k,t}(\theta_2)$, it suffices to exchange $\theta_1$ and $\theta_2$ in the previous calculation. With this flipping, the KL divergences $\Delta_j$ remain unchanged, while the term $\xi_{k,t}$ must be replaced by $-\xi_{k,t}$, yielding
\begin{equation}
p_{k,t}(\theta_2)=\mathbb{P}_{\theta_2}\left[
\log\frac{\bm{\mu}_{k,t}(\theta_2)}{\bm{\mu}_{k,t}(\theta_1)}\leq 0
\right]=
Q\left(
\frac{\nu_{k,t} -\xi_{k,t}}{\sigma_{k,t}}
\right).
\label{eq:pkt2}
\end{equation}
Combining \eqref{eq:pkt1} and \eqref{eq:pkt2}, the total error probability is
\begin{equation}
p_{k,t}=\pi(\theta_1)\,
Q\left(
\frac{\nu_{k,t}+\xi_{k,t}}{\sigma_{k,t}}
\right)
+\pi(\theta_2)\,
Q\left(
\frac{\nu_{k,t}-\xi_{k,t} }{\sigma_{k,t}}
\right).
\label{eq:toterrdec}
\end{equation}
Letting $\varepsilon_{jk,t}\triangleq [A^t]_{jk} - v_j$, using \eqref{eq:1momdef} and \eqref{eq:2momdef}, and introducing the definition
\begin{equation}
\Delta \triangleq\sum_{j=1}^K \Delta_j,
\label{eq:bardel}
\end{equation} 
we can write
\begin{equation}
\nu_{k,t}=
t \; \sum_{j=1}^K v_j\,\gamma_j\,\Delta_j
\;+\;
\underbrace{\sum_{\tau=1}^t\sum_{j=1}^K \varepsilon_{jk,\tau}\,\gamma_j\,\Delta_j}_{\triangleq \varepsilon_{k,t}'}
\label{eq:nukt1}
\end{equation}
and
\begin{align}
\sigma^2_{k,t}&=
2 \sum_{\tau=1}^t\sum_{j=1}^K \left(v_j+\varepsilon_{jk,\tau}\right)^2 \,\gamma_j^2\,\Delta_j
\nonumber\\
&
=
2\,t\,\sum_{j=1}^K v_j^2\,\gamma_j^2\,\Delta_j
+
\underbrace{
4 \sum_{\tau=1}^t\sum_{j=1}^K \varepsilon_{jk,\tau} \,v_j\,\gamma_j^2\,\Delta_j
+
2 \sum_{\tau=1}^t\sum_{j=1}^K \varepsilon_{jk,\tau}^2\,\gamma_j^2\, \Delta_j
}_{\triangleq\varepsilon_{k,t}''}.
\label{eq:sig2kt1}
\end{align}
Recalling that we need to examine the traditional social learning strategy with doubly stochastic combination matrix and the NB$^2$ strategy (with left-stochastic matrix), it is convenient to write \eqref{eq:nukt1} and \eqref{eq:sig2kt1} in a compact form holding for both strategies. To this end, we introduce the quantity
\begin{equation}
\alpha \triangleq v_j\,\gamma_j=
\begin{cases}
1  &\textnormal{for NB$^2$,}
\\
\displaystyle{\frac{1}{K}} &\textnormal{for trad. SL with doubly stoch. $A$.}
\end{cases}
\label{eq:alphadefin}
\end{equation}
Using \eqref{eq:bardel} and \eqref{eq:alphadefin} into \eqref{eq:nukt1} and \eqref{eq:sig2kt1} we obtain, respectively, 
\begin{equation}
\nu_{k,t}=
t \; \alpha\,\Delta
\;+\;
\varepsilon_{k,t}',\qquad
\sigma^2_{k,t}=
2\,t\,\alpha^2 \Delta
+
\varepsilon_{k,t}'',
\end{equation}
which, substituted into \eqref{eq:toterrdec}, yields
\begin{align}
p_{k,t}
&=
\pi(\theta_1) 
Q\left(
\frac{t\,\alpha\,\Delta +\varepsilon_{k,t}'+\xi_{k,t}}{\sqrt{2\,t\,\alpha^2\,\Delta+\varepsilon_{k,t}''}}
\right)
+\pi(\theta_2)
Q\left(
\frac{t\,\alpha\,\Delta + \varepsilon_{k,t}'-\xi_{k,t} }{\sqrt{2\,t\,\alpha^2\,\Delta+\varepsilon_{k,t}''}}
\right).
\label{eq:pktprimasenzalabel}
\end{align}
The derivation for the centralized MAP rule is similar. We start from the true posterior in \eqref{eq:truepost} to obtain the identity
\begin{equation}
\log\frac{\bm{\mu}^\star_{t}(\theta_1)}{\bm{\mu}^\star_{t}(\theta_2)}=
\underbrace{\log\frac{\pi(\theta_1)}{\pi(\theta_2)}}_{\triangleq \xi}
+\sum_{\tau=1}^t\sum_{k=1}^K \log\frac{\ell_k(\bm{x}_{k,\tau}|\theta_1)}{\ell_k(\bm{x}_{k,\tau}|\theta_2)},
\label{eq:optBayes}
\end{equation}
and reasoning as before it is straightforward to compute the
total error probability for the optimal system:
\begin{equation}
p^\star_{t}=\pi(\theta_1)\,
Q\left(
\frac{t\Delta +\xi }{\sqrt{2 t\Delta }}
\right)+\pi(\theta_2)\,
Q\left(
\frac{t\Delta -\xi}{\sqrt{2 t\Delta }}
\right).
\label{eq:toterropt}
\end{equation}
To prove the claim of the theorem, we now evaluate the ratio between the  error probability $p_{k,t}$ from \eqref{eq:toterrdec} and the optimal error probability $p^\star_t$ from \eqref{eq:toterropt}. 
It is convenient to introduce the following auxiliary variables:
\begin{equation}
y_{1,t}\triangleq \frac{t\Delta + ( \varepsilon_{k,t}' + \xi_{k,t})/\alpha}
{\sqrt{2\,t\,\Delta +\varepsilon_{k,t}''/\alpha^2}},
\;\;\;
y_{2,t}\triangleq \frac{t\Delta + (\varepsilon_{k,t}'-\xi_{k,t})/\alpha}
{\sqrt{2\,t\,\Delta +\varepsilon_{k,t}''/\alpha^2}}
\label{eq:yvardef}
\end{equation}
and
\begin{equation}
z_{1,t}\triangleq \frac{t\Delta+\xi}{\sqrt{2 t\Delta }},
\qquad
z_{2,t}\triangleq\frac{t\Delta-\xi}{\sqrt{2 t\Delta }},
\label{eq:zvardef}
\end{equation}
which, substituted into \eqref{eq:pktprimasenzalabel} and \eqref{eq:toterropt}, allow us to write
\begin{equation}
\frac{p_{k,t}}{p^\star_t}=
\frac{\pi(\theta_1)Q(y_{1,t})+\pi(\theta_2)Q(y_{2,t})}
{\pi(\theta_1)Q(z_{1,t})+\pi(\theta_2)Q(z_{2,t})}.
\label{eq:pktptratio}
\end{equation}
We recall the following bounds:
\begin{equation}
\frac{1}{\sqrt{2\pi}}\frac{x}{1+x^2}e^{-\frac{x^2}{2}}
\leq 
Q(x)
\leq
\frac{1}{\sqrt{2\pi}}\frac{1}{x}e^{-\frac{x^2}{2}},\qquad x>0.
\label{eq:Qfunbounds}
\end{equation}
Applying the upper bound to the numerator of \eqref{eq:pktptratio} and the lower bound to the denominator, we obtain
\begin{equation}
\frac{p_{k,t}}{p^\star_t}\leq
\frac{
\displaystyle{
\frac{\pi(\theta_1)}{y_{1,t}}
e^{-\frac{y_{1,t}^2}{2}}
+
\frac{\pi(\theta_2)}{y_{2,t}}e^{-\frac{y_{2,t}^2}{2}}
}
}
{
\displaystyle{
\frac{\pi(\theta_1)\,z_{1,t}}{1+z^2_{1,t}}
e^{-\frac{z_{1,t}^2}{2}}
+
\frac{\pi(\theta_2)\,z_{2,t}}{1+z^2_{2,t}}
e^{-\frac{z_{2,t}^2}{2}}
}
}.
\label{eq:pktpstarationewwithdefs}
\end{equation}
From the explicit expressions for the variables $y_{1,t}$, $y_{2,t}$, $z_{1,t}$, and $z_{2,t}$ in \eqref{eq:yvardef} and \eqref{eq:zvardef}, we obtain the following identities:
\begin{equation}
\frac{y^2_{1,t}}{2}
=
\dfrac
{t\Delta }
{4+\dfrac{2\,\varepsilon_{k,t}''}{\alpha^2\,t\,\Delta}}
+\underbrace{
\dfrac{
\dfrac{(\varepsilon_{k,t}' + \xi_{k,t})^2}{\alpha^2\,t\,\Delta} 
+
2\, \dfrac{\varepsilon_{k,t}' + \xi_{k,t} }{\alpha}
}
{4+\dfrac{2\,\varepsilon_{k,t}''}{\alpha^2\,t\,\Delta}}
}_{
\triangleq f_{1,t}
},
\label{eq:formula1}
\end{equation}
\begin{equation}
\frac{y^2_{2,t}}{2}
=
\frac
{t\Delta }
{4+\dfrac{2\,\varepsilon_{k,t}''}{\alpha^2\,t\,\Delta}}
+
\underbrace{
\frac{
\dfrac{(\varepsilon_{k,t}' - \xi_{k,t})^2}{\alpha^2\,t\,\Delta} 
+
2\, \dfrac{\varepsilon_{k,t}'-\xi_{k,t}}{\alpha}
}
{4+\dfrac{2\,\varepsilon_{k,t}''}{\alpha^2\,t\,\Delta}}
}_{\triangleq f_{2,t}},
\label{eq:formula2}
\end{equation}
\begin{equation}
\frac{z^2_{1,t}}{2}
=
\frac{t\Delta}{4}
+\underbrace{
\frac{\xi^2 }{4\,t\,\Delta}
+\frac{\xi}{2}}_{\triangleq g_{1,t}},
\qquad
\frac{z^2_{2,t}}{2}=
\frac{t\Delta}{4}
+\underbrace{
\frac{\xi^2 }{4\,t\,\Delta}
-\frac{\xi}{2}}_{\triangleq g_{2,t}}.
\label{eq:formula34}
\end{equation}
Using \eqref{eq:formula1}--\eqref{eq:formula34}, we can recast  \eqref{eq:pktpstarationewwithdefs} into the compact form
\begin{align}
\frac{p_{k,t}}{p^\star_t}
&\leq
\exp\left\{
\dfrac{t\Delta}{4}
\left(
\dfrac
{\varepsilon_{k,t}''}
{2\,\alpha^2\,t\,\Delta+\varepsilon_{k,t}''}
\right)
\right\}
\frac
{\pi(\theta_1)
\displaystyle{
\frac{e^{-f_{1,t}}}{y_{1,t}}
}
+
\displaystyle{
\pi(\theta_2)\frac{e^{-f_{2,t}}}{y_{2,t}}
}
}
{
\dfrac{\pi(\theta_1)\,z_{1,t}}{1+z^2_{1,t}}
\,e^{-g_{1,t}}
+
\dfrac{\pi(\theta_2)\,z_{2,t}}{1+z^2_{2,t}}
\,e^{-g_{2,t}}
}.
\label{eq:pktptratioconvenient}
\end{align}
Observe now that from Lemma~\ref{lem:classicmatlem} we have
\begin{equation}
\lim_{t\rightarrow\infty} \xi_{k,t} =\xi_{\mathrm{net}}, 
\label{eq:xixikappatconverg2}
\end{equation}
where $\xi_{k,t}$ and $\xi_{\mathrm{net}}$ are defined in \eqref{eq:fcnet} and \eqref{eq:xixinet}, respectively.
In view of Lemma~\ref{lem:neterr}, we can write
\begin{equation}
\lim_{t\rightarrow\infty} \varepsilon_{k,t}'=\varepsilon_k',
\qquad
\lim_{t\rightarrow\infty} \varepsilon_{k,t}''=\varepsilon_k'',
\label{eq:epsiepsikappatconverg2}
\end{equation}
where $\varepsilon_{k,t}'$, $\varepsilon_{k,t}''$, $\varepsilon_k'$, and $\varepsilon_k''$ are defined in \eqref{eq:nukt1}, \eqref{eq:sig2kt1}, \eqref{eq:epsprime}, and \eqref{eq:epsprimeprime}.
Using \eqref{eq:xixikappatconverg2} and \eqref{eq:epsiepsikappatconverg2}, we now compute some useful limits.
First, from \eqref{eq:yvardef} and \eqref{eq:zvardef} we get
\begin{equation}
\lim_{t\rightarrow\infty}\frac{\sqrt{t}}{y_{n,t}}
=
\lim_{t\rightarrow\infty}\frac{\sqrt{t}\,z_{n,t}}{1+z^2_{n,t}}
=\sqrt{\frac{2}{\Delta }},\qquad n=1,2.
\label{eq:yzlims}
\end{equation} 
Second, from the definition of $f_{1,t}$, $f_{2,t}$ introduced in \eqref{eq:formula1} and \eqref{eq:formula2}, we can write
\begin{equation}
\lim_{t\rightarrow\infty} 
f_{1,t}=\frac{
\varepsilon_k'+\xi_{\mathrm{net}}
}{2\,\alpha},\qquad
\lim_{t\rightarrow\infty} 
f_{2,t}=\frac{\varepsilon_k' - 
\xi_{\mathrm{net}}
}
{2\,\alpha},
\label{eq:flims}
\end{equation}
Third, using the expressions for $g_{1,t}$, and $g_{2,t}$ introduced in \eqref{eq:formula34}, we obtain
\begin{equation}
\lim_{t\rightarrow\infty} 
g_{1,t}=\frac{\xi}{2},
\qquad
\lim_{t\rightarrow\infty} 
g_{2,t}=-\frac{\xi}{2}.
\label{eq:glims}
\end{equation}
Furthermore, observe that
\begin{equation}
\lim_{t\rightarrow\infty}
\frac{t\Delta}{4}
\left(
\frac
{\varepsilon_{k,t}''}
{2\,\alpha^2\,t\,\Delta+\varepsilon_{k,t}''}
\right)
=
\frac{\varepsilon_k''}{8\,\alpha^2}.
\label{eq:mainexplim}
\end{equation}
Substituting \eqref{eq:yzlims}, \eqref{eq:flims}, \eqref{eq:glims}, and \eqref{eq:mainexplim} into \eqref{eq:pktptratioconvenient}, we get
\begin{align} 
&\limsup_{t\rightarrow\infty}
\frac{p_{k,t}}{p^\star_t}
\leq
e^{
-\frac{\varepsilon_k'}{2\,\alpha}
+
\frac{\varepsilon_k''}{8\,\alpha^2}
}
\frac
{\pi(\theta_1)e^{-\frac{\xi_{\mathrm{net}}}{2\,\alpha}}
+
\pi(\theta_2)e^{\frac{\xi_{\mathrm{net}}}{2\,\alpha}}
}
{
\pi(\theta_1)e^{-\frac{\xi}{2}}
+
\pi(\theta_2)e^{\frac{\xi}{2}}
}.
\label{eq:limsupfirstexp}
\end{align}
Exploiting the definitions of $\varepsilon_k'$ and $\varepsilon_k''$ from \eqref{eq:epsprime} and \eqref{eq:epsprimeprime}, respectively, we obtain the identity (recall that $v_j\,\gamma_j=\alpha$)
\begin{equation}
-\frac{\varepsilon_k'}{2\,\alpha}
+
\frac{\varepsilon_k''}{8\,\alpha^2}
=
\varepsilon_k,
\label{eq:epskident}
\end{equation}
where $\varepsilon_k$ is defined in  \eqref{eq:overallneterr}.
Moreover, from the definition of $\xi$ in \eqref{eq:xixinet}, we have
$\pi(\theta_1)=\pi(\theta_2) \, e^{\xi}$, which, used in the denominator appearing on the RHS of \eqref{eq:limsupfirstexp}, yields
\begin{align}
&\frac
{\pi(\theta_1)e^{-\frac{\xi_{\mathrm{net}}}{2\,\alpha}}
+
\pi(\theta_2)e^{\frac{\xi_{\mathrm{net}}}{2\,\alpha}}
}
{
\pi(\theta_1)e^{-\frac{\xi}{2}}
+
\pi(\theta_2)e^{\frac{\xi}{2}}
}
=\frac
{\frac{\pi(\theta_1)}{\pi(\theta_2)}
e^{-\frac{\xi_{\mathrm{net}}}{2\,\alpha}}
+
e^{\frac{\xi_{\mathrm{net}}}{2\,\alpha}}
}
{
e^{\xi}\,e^{-\frac{\xi}{2}}
+
e^{\frac{\xi}{2}}
}
\nonumber\\
&=\frac
{e^{\xi}
e^{-\frac{\xi_{\mathrm{net}}}{2\,\alpha}}
+
e^{\frac{\xi_{\mathrm{net}}}{2\,\alpha}}
}
{
2\,
e^{\frac{\xi}{2}}
}
=\frac
{e^{\frac{\xi}{2}}
e^{-\frac{\xi_{\mathrm{net}}}{2\,\alpha}}
+
e^{-\frac{\xi}{2}}
e^{\frac{\xi_{\mathrm{net}}}{2\,\alpha}}
}
{2}
\nonumber\\
&=
\cosh
\left(
\frac
{\xi-\xi_{\mathrm{net}}/\alpha}
{2}
\right).
\label{eq:coshdevelop}
\end{align}
Substituting \eqref{eq:epskident} and \eqref{eq:coshdevelop} into \eqref{eq:limsupfirstexp}, we obtain
\begin{align} 
\limsup_{t\rightarrow\infty}
\frac{p_{k,t}}{p^\star_t}
&\leq
\exp\left(
\varepsilon_k
\right)\,
\cosh
\left(
\frac
{\xi-\xi_{\mathrm{net}}/\alpha}
{2}
\right).
\label{eq:limsupfinal}
\end{align}
The analysis to compute \eqref{eq:limsupfinal}
can be repeated to compute a lower bound on the limit inferior. Specifically, by applying the lower (resp. upper) bound in \eqref{eq:Qfunbounds} to the numerator (resp. denominator) of \eqref{eq:pktptratio}, we can write
\begin{equation}
\frac{p_{k,t}}{p^\star_t}\geq
\frac
{
\displaystyle{
\frac{\pi(\theta_1)\,y_{1,t}}{1+y^2_{1,t}}e^{-\frac{y_{1,t}^2}{2}}
+
\frac{\pi(\theta_2)\,y_{2,t}}{1+y^2_{2,t}}e^{-\frac{y_{2,t}^2}{2}}
}
}
{
\displaystyle{
\frac{\pi(\theta_1)}{z_{1,t}}e^{-\frac{z_{1,t}^2}{2}}
+
\frac{\pi(\theta_2)}{z_{2,t}}e^{-\frac{z_{2,t}^2}{2}}
}
},
\end{equation}
which, by repeating the same steps that led to \eqref{eq:limsupfinal}, yields
\begin{align} 
\liminf_{t\rightarrow\infty}
\frac{p_{k,t}}{p^\star_t}
&\geq
\exp\left(
\varepsilon_k
\right)\,
\cosh
\left(
\frac
{\xi-\xi_{\mathrm{net}}/\alpha}
{2}
\right).
\label{eq:liminffinal}
\end{align}
Joining \eqref{eq:limsupfinal} and \eqref{eq:liminffinal}, we have in fact shown the claim of the theorem.

\bibliographystyle{IEEEtran}

\begin{thebibliography}{10}
\providecommand{\url}[1]{#1}
\csname url@samestyle\endcsname
\providecommand{\newblock}{\relax}
\providecommand{\bibinfo}[2]{#2}
\providecommand{\BIBentrySTDinterwordspacing}{\spaceskip=0pt\relax}
\providecommand{\BIBentryALTinterwordstretchfactor}{4}
\providecommand{\BIBentryALTinterwordspacing}{\spaceskip=\fontdimen2\font plus
\BIBentryALTinterwordstretchfactor\fontdimen3\font minus
  \fontdimen4\font\relax}
\providecommand{\BIBforeignlanguage}[2]{{%
\expandafter\ifx\csname l@#1\endcsname\relax
\typeout{** WARNING: IEEEtran.bst: No hyphenation pattern has been}%
\typeout{** loaded for the language `#1'. Using the pattern for}%
\typeout{** the default language instead.}%
\else
\language=\csname l@#1\endcsname
\fi
#2}}
\providecommand{\BIBdecl}{\relax}
\BIBdecl



\bibitem{ourEUSIPCO2025}
F.~Scala, M.~Carpentiero, V.~Matta and A.~H. Sayed, ``On the performance of social learning,'' in \emph{Proc. EUSIPCO}, Palermo, Italy, Sep. 2025, pp. 1040--1044.


\bibitem{mattaSocialLearningOpinion2024}
V.~Matta, V.~Bordignon, and A.~H. Sayed, \emph{Social {{Learning}}: {{Opinion
  Formation}} and {{Decision-Making Over Graphs}}}.\hskip 1em plus 0.5em minus
  0.4em\relax Now Publishers, 2025.
 
\bibitem{bordignonSociallyIntelligentNetworks2024}
V.~Bordignon, V.~Matta, and A.~H. Sayed, ``Socially {{intelligent networks}}:
  {{A}} framework for decision making over graphs,'' \emph{IEEE Sig.
  Process. Mag.}, vol.~41, no.~4, pp. 20--39, Jul. 2024.
 
\bibitem{zhaoLearningSocialNetworks2012}
X.~Zhao and A.~H. Sayed, ``Learning over social networks via diffusion adaptation,'' in \emph{Proc. Asilomar Conference on Signals, Systems, and Computers}, Pacific Grove, CA, USA, 2012, pp. 709--713.
 
\bibitem{jadbabaieNonBayesianSocialLearning2012}
A.~Jadbabaie, P.~Molavi, A.~Sandroni, and A.~{Tahbaz-Salehi},
  ``Non-{{Bayesian}} social learning,'' \emph{Games and Economic Behavior},
  vol.~76, no.~1, pp. 210--225, Sep. 2012.
 
\bibitem{lalithaSocialLearningDistributed2018}
A.~Lalitha, T.~Javidi, and A.~D. Sarwate, ``Social learning and distributed hypothesis testing,'' \emph{IEEE Trans. Inf. Theory}, vol.~64, no.~9, pp. 6161--6179, Sep. 2018.
 
\bibitem{nedicFastConvergenceRates2017}
A.~Nedi{\'c}, A.~Olshevsky, and C.~A. Uribe, ``Fast convergence rates for distributed non-Bayesian learning,'' \emph{IEEE Trans. Automat. Control}, vol.~62, no.~11, pp. 5538--5553, Nov. 2017.
 
\bibitem{Jadbabaie2018}
P.~Molavi, A.~Tahbaz-Salehi, and A.~Jadbabaie, ``A theory of non-{B}ayesian social learning,'' \emph{Econometrica}, vol.~86, no.~2, pp. 445--490, Mar. 2018.

\bibitem{Djuric2015}
Y.~Wang and P.~M. Djurić, ``Social learning with Bayesian agents and random decision making'' in \emph{IEEE Trans. Signal Process.}, vol. 63, no. 12, pp. 3241-3250, Jun. 2015.

\bibitem{krishnamurthySocialLearningBayesian2013}
V.~Krishnamurthy and H.~V. Poor, ``Social learning and Bayesian games in multiagent signal processing: How do local and global decision makers interact?'' \emph{IEEE Signal Process. Mag.}, vol.~30, no.~3, pp. 43--57, May 2013.
 
\bibitem{chamleyRationalHerdsEconomic2004}
C.~Chamley, \emph{Rational {{Herds}}: {{Economic Models}} of {{Social
  Learning}}}.\hskip 1em plus 0.5em minus 0.4em\relax Cambridge University
  Press, 2004.
   
\bibitem{mitraNewApproachDistributed2021}
A.~Mitra, J.~A. Richards, and S.~Sundaram, ``A {{new approach}} to
  {{distributed hypothesis testing}} and {{non-Bayesian learning}}: {{Improved
  learning rate}} and {{Byzantine resilience}},'' \emph{IEEE Trans. Automat. Control}, vol.~66, no.~9, pp. 4084--4100, Sep. 2021.
 
\bibitem{MertAAGA}
M.~Kayaalp, Y.~İnan, E.~Telatar, and A.~H. Sayed, ``On the arithmetic and geometric fusion of beliefs for distributed inference," \emph{IEEE Trans. Automat. Control}, vol.~69, no.~4, pp. 2265--2280, Apr. 2024.

\bibitem{HuangWang}
B.~Huang, I-H.~Wang ``On the price of decentralization in decentralized detection,'' \emph{IEEE Trans. Inf. Theory}, vol.~71, no.~4, pp. 2341--2359, Apr. 2025.
 
\bibitem{bordignonSocialLearningNonBayesian2023}
V.~Bordignon, M.~Kayaalp, V.~Matta, and A.~H. Sayed, ``Social {{learning}} with
  {{non-Bayesian local updates}},'' in \emph{Proc. EUSIPCO}, Helsinki, Finland, Sep. 2023, pp. 1--5.
 
\bibitem{hornJohnson}
R.~A. Horn and C.~R. Johnson, \emph{Matrix Analysis}. Cambridge {U}niversity {P}ress, 2012.
 
\bibitem{coverthomas}
T.~M. Cover and J.~A. Thomas, \emph{Elements of Information Theory}. \hskip 1em plus 0.5em minus
  0.4em\relax John Wiley \& Sons, 1991.
 
\bibitem{KLDPI}
Y.~Polyanskiy and Y.~Wu, \emph{Information Theory: From Coding to Learning}. \hskip 1em plus 0.5em minus 0.4em\relax Cambridge University Press, 2023.
 
\bibitem{demboLargeDeviationsTechniques2009}
A.~Dembo and O.~Zeitouni, \emph{Large Deviations Techniques and Applications}.\hskip 1em plus 0.5em minus 0.4em\relax Springer, 2009.
 
\bibitem{denhollanderLargeDeviations2000}
F.~{den Hollander}, \emph{Large {{Deviations}}}.\hskip 1em plus 0.5em minus
  0.4em\relax AMS, 2000.

\bibitem{Kar2018}
D.~Li, S.~Kar, and S.~Cui, ``Distributed quickest detection in sensor networks via two-layer large deviation analysis,'' \emph{IEEE Internet Things J.}, vol. 5, no. 2, pp. 930--942, Apr. 2018.

\bibitem{SungTongPoor}
Y.~Sung, L.~Tong, and H.~V. Poor, ``Neyman-Pearson detection of Gauss-Markov signals in noise: closed-form error exponent and properties,'' in \emph{IEEE Trans. Inf. Theory}, vol. 52, no. 4, pp. 1354-1365, Apr. 2006.

\bibitem{MouraLDnonGauss}
D.~Bajović, D.~Jakovetic, J.~M.~F. Moura, J.~Xavier, and B.~Sinopoli, ``Large deviations performance of consensus+innovations distributed detection with non-{G}aussian observations,'' \emph{{IEEE} Trans. Signal Process.}, vol.~60, no.~11, pp. 5987--6002, Nov. 2012.

\bibitem{MouraDIperformance}
D.~Bajović, J.~M.~F. Moura, J.~Xavier and B.~Sinopoli, ``Distributed Inference Over Directed Networks: Performance Limits and Optimal Design,'' \emph{{IEEE} Trans. Signal Process.}, vol.~64, no.~13, pp. 3308--3323, Jul. 2016.
 
\bibitem{Consensus}
L.~Xiao and S.~Boyd, ``Fast linear iterations for distributed averaging,'' \emph{Systems \& Control Letters}, vol.~53, no.~1, pp. 65--78, Feb. 2004.
 
\bibitem{BoydFastMixing}
S.~Boyd, P.~Diaconis, P.~Parrillo, and L.~Xiao, ``Fastest mixing Markov
chain on graphs with symmetries,'' \emph{SIAM Journal on Optimization}, vol.~20, no.~2, pp. 792--819, Mar. 2009.
 
\bibitem{sayednewbooks}
A.~H. Sayed, \emph{Inference and Learning from Data}, Cambridge University Press, 2022.
 
\bibitem{MattaTSIPNexactasy}
V.~Matta, P.~Braca, S.~Marano and A.~H. Sayed, ``Distributed detection over adaptive networks: Refined asymptotics and the role of connectivity,'' \emph{IEEE Trans. Signal Inf. Process. Netw.}, vol.~2, no.~4, pp. 442--460, Dec. 2016.

\end{thebibliography}


\end{document}